\documentclass[10pt, conference, letterpaper]{IEEEtran}
\IEEEoverridecommandlockouts

\usepackage{cite}
\usepackage{amsmath,amssymb,amsfonts}
\usepackage{algorithm}
\usepackage{algorithmic}
\usepackage{graphicx}
\usepackage{textcomp}
\usepackage{xcolor}
\usepackage{subcaption}
\usepackage{bm}

\usepackage{hyperref}
\hypersetup{
	colorlinks=true,
	linkcolor=blue,
	citecolor=blue,
	urlcolor=black
}

\newtheorem{theorem}{Theorem}
\newtheorem{proposition}{Proposition}

\newtheorem{lemma}{Lemma}
\newtheorem{assumption}{Assumption}
\renewcommand{\IEEEQED}{\IEEEQEDopen}
\makeatletter

\let\HyThm@begintheorem\@begintheorem
\def\@begintheorem#1#2{%
	\phantomsection\HyThm@begintheorem{#1}{#2}}
\let\HyThm@opargbegintheorem\@opargbegintheorem
\def\@opargbegintheorem#1#2#3{%
	\phantomsection\HyThm@opargbegintheorem{#1}{#2}{#3}}
\makeatother

\DeclareMathOperator{\bpi}{\bm{\pi}}
\DeclareMathOperator{\bomega}{\bm{\omega}}

\DeclareMathOperator{\Ind}{\mathbb{I}}
\DeclareMathOperator{\Ex}{\mathbb{E}}
\DeclareMathOperator{\Pb}{\mathbb{P}}
\newcommand{\wtilde}[1]{\widetilde{#1}}

\def\BibTeX{{\rm B\kern-.05em{\sc i\kern-.025em b}\kern-.08em
		T\kern-.1667em\lower.7ex\hbox{E}\kern-.125emX}}

\newif\ifisreport
\isreporttrue 

\begin{document}

\title{Load Balancing with Partial Queue Information: Threshold Optimality and Indexability}
\author{
	\IEEEauthorblockN{Sathwik Chadaga and Eytan Modiano}
	\IEEEauthorblockA{Laboratory for Information and Decision Systems, Massachusetts Institute of Technology, Cambridge, MA}
	\vspace{-16pt} \thanks{This work was supported by NSF grants CNS-2148183, CNS-2106268, and CNS-2148128.}
}

\maketitle

\begin{abstract}
We consider the problem of load balancing in a system with one dispatcher and $N$ parallel servers. The dispatcher must select one server to dispatch new jobs at every time-step and each server buffers incoming jobs in a queue. However, the dispatcher does not know the servers' backlogs and must make dispatching decisions based on previous observations. The dispatcher's objective is to dispatch jobs to the shortest queue. This problem can be formulated as a restless multi-arm bandit (RMAB) problem where each arm's state is its corresponding belief vector. Our goal is to verify Whittle indexability for this problem and derive a low complexity Whittle index policy. Previous Whittle indexability results cannot be directly applied due to the multi-dimensional nature of the belief vector. To overcome this issue, we define the RMAB state as the tuple of the most recent backlog observation and the time since this observation. We consider two model variations, a standard finite queue model and a blocking queue model. We show that the single-arm decoupled problems of both these models have threshold optimal solutions under some assumptions. For the standard finite queue model, we prove indexability and derive the Whittle index policy in closed form. For the blocking queue model, we derive a sufficient condition for indexability under threshold optimality and use it to show indexability for some special cases.
\end{abstract}

\begin{IEEEkeywords}
Load balancing, Restless Multi-Arm Bandits, Whittle indexability
\end{IEEEkeywords}

\section{Introduction}  \label{sec:intro}
The task of job dispatching and load balancing plays a central role in modern communication systems and is critical to ensure good delay and throughput performance. The simple model of a central dispatcher and $N$ parallel servers captures many practical systems like data centers, web server farms, and content delivery systems.

One of the most fundamental load balancing policies is the join-the-shortest-queue (JSQ) policy.
The JSQ policy selects the least backlogged server at any time and is known to be delay optimal under certain assumptions on service times and homogeneity of servers \cite{weber_jsq}. 
However, the JSQ policy requires the dispatcher to know the backlogs of all servers at all times. In most practical systems, server backlogs are not readily available to the dispatcher and it must probe the servers individually to reveal their backlogs. Aggregating the backlog information this way adds a significant communication overhead. This motivates the study of scalable load balancing policies with low control overhead. Consequently, the JSQ-$d$ policy was proposed in \cite{mitz_jsq_d} 
that probes $d$ randomly selected servers instead of probing all the servers. It then selects the shortest of the $d$ randomly selected servers. This reduces the control overhead significantly since $d<N$.
Further, the join-the-idle-queue (JIQ) policy 
which introduced memory at the dispatcher  was proposed in \cite{jiq_1}. In the JIQ policy, the servers send a token back to the dispatcher whenever they become idle. This way, the dispatcher keeps track of the idle servers and can dispatch new jobs to one of the idle servers. This allows further reduction in the communication overhead.


Building upon the JIQ policy's idea of dispatcher having memory, we consider a system in which the dispatcher keeps track of observed server backlogs and makes its dispatching decisions based on past observations.
Specifically, whenever the dispatcher selects a server and dispatches new jobs to it, the selected server sends back its current backlog information as part of the acknowledgment. This setup has a very low control overhead since only the single selected server sends its backlog information.
Our objective is to design efficient load balancing strategies for this partially observed system. We focus on finite queues with maximum backlog $K$ and model each server as a $K+1$ state Markov chain. This problem can be formulated as a Restless Multi-Arm Bandit (RMAB) problem where each server corresponds to a restless arm.
Notice that the arms are restless since they evolve even when they are not selected due to random departures. 

The optimal solution to the RMAB problem involves dynamic programming and suffers from the curse of dimensionality. Therefore, Whittle \cite{whittle} considered the Lagrangian relaxation of RMAB and proposed a heuristic policy called the Whittle index policy. The Whittle index policy is computationally efficient and is shown to be asymptotically optimal \cite{whittle_optimality} under some conditions. However, this low complexity index solution exists only when a special condition called Indexability is satisfied. Whittle index policies have been used  in various fields like age of information, spectrum sharing, and machine maintenance \cite{vishrant_age, spectrum_sharing_main, spectrum_sharing_2, akbarzadeh_machine}.
Further, in \cite{known_queues_glazebrook}, a Whittle index based job dispatcher was proposed when the queue backlogs are known. However, this policy cannot be applied to partially observed systems since backlogs are unknown.

A standard approach to deal with partially observed RMABs is to convert them to fully observed processes by working with beliefs rather than the exact underlying state. For example, such techniques are widely applied in the literature of spectrum sharing \cite{spectrum_sharing_main, spectrum_sharing_2} where, a controller must select from $N$ on/off channels whose true states are unknown. Since the channel states are unknown, the controller makes its decision based on the beliefs or conditional probabilities that the channels are on.
Using this technique, Whittle index policies have been proposed for spectrum sharing in \cite{spectrum_sharing_main, spectrum_sharing_2}. However, in these works, the belief is a scalar (probability that a channel is active), and admits a closed form solution to the relaxed Bellman equations, which can be used to show indexability and derive the Whittle index policy.
Unfortunately, the belief in our system is a $K+1$ dimensional vector since each queue can take values in $\{0,...,K\}$. This makes it difficult to derive closed-form solutions.

In \cite{whittle_condition_nino}, a sufficient condition called Partial Conservation Law (PCL) was derived to verify indexability of RMAB problems with general multi-dimensional state vectors. Even this condition is hard to verify due to the complex nature of our problem.
In \cite{whittle_condition_akbarzadeh}, a sufficient condition was derived that ensures indexability of resetting processes in which the state resets to a fixed state whenever the controller performs an active action.
However, the load balancing problem we consider is not a resetting process without additional assumptions.


In this paper, we consider the problem of load balancing in a system with one dispatcher and $N$ parallel servers.
The dispatcher must select one server at each time-step, but does not know the queue backlogs and must rely on past observations. It receives a reward depending on the selected server's backlog, and its objective is to maximize the total discounted reward with a discount factor $\beta$. Optimal solution to this problem involves dynamic programming and is computationally infeasible. Therefore, we seek to verify Whittle indexability and design a low complexity Whittle index policy. We summarize our contributions below.
\begin{itemize}
	\item We model our job dispatching problem as a RMAB problem where each arm's state is defined as the tuple of  most recent observation and the age of that observation. We consider the following two server models. 
	\item The finite queue model: We derive the closed-form solution to this model's single-arm decoupled problem and show that it has a threshold structure. We then prove its indexability and derive its Whittle index in closed form.
	\item The blocking queue model: We show that the optimal solution to this model's decoupled problem also has a threshold structure. We derive a sufficient condition for indexability of such problems with threshold optimality. We then use this condition to verify indexability for two special cases: $K\leq 2$ and $\beta\leq 0.5$.
\end{itemize}

The rest of the paper is organized as follows. We describe our model and formulate the RMAB problem in Section \ref{sec:prob}. We consider the RMAB's Lagrange relaxation and derive the decoupled single-arm problem in Section \ref{sec:relaxation}. We study the finite queues model in Section \ref{sec:model1} and the blocking queues model in Section \ref{sec:model2}. Finally, we present our numerical results in Section \ref{sec:sims} and conclusions in Section \ref{sec:conc}.

\section{Problem Formulation}  \label{sec:prob}
We consider a job dispatching system with one job dispatcher and $N$ parallel servers. The system operates at discrete time-slots $t \in \{0,1,...\}$. Each server has a first-in-first-out finite queue to buffer incoming jobs and its maximum buffer capacity is $K$. We denote by $Q_i(t)\in\{0,1,...,K\}$ the queue backlog at server $i\in\{1,...,N\}$ at time $t$. The queue service rates are denoted by $\mu_i, \; i\in\{1,...,N\}$, and we model the number of jobs departing server $i$ at time $t$ by $D_i(t)$, a Poisson random variable with rate $\mu_i$ and independent across time. Apart from the jobs sent by the dispatcher, our model also allows exogenous arrivals but we later make a zero exogenous arrival assumption to show threshold optimality.

At each time $t$, the dispatcher chooses exactly one server to dispatch new jobs. We denote the dispatching action at time $t$ by $\bpi(t) := (\pi_1(t), \pi_2(t), ..., \pi_N(t))$, where $\pi_i(t)\in \{0,1\}$ $\forall i$ and $\sum_{i=1}^{N} \pi_i(t) = 1$. Also, we denote the server chosen by dispatcher at time $t$ by $i_\pi(t) \in \{1,...,N\}$ so that $\pi_{i(t)}(t) = 1$. To ease our analysis, we assume that there are at least $K$ jobs available at the dispatcher at all times. We denote by $A_{i,\pi_i}(t)$ the number of new jobs dispatched to server $i$ at time $t$ under action $\pi_i(t)$.

We study two variations of the server model.
In the first variation, we consider a standard finite queues model for each server. The dispatcher's objective is to find the shortest queue and maximize the number of jobs served. Hence it sends as many jobs as possible to the selected server. However, since the selected server has a finite capacity $K$, it only accepts $K- Q_i(t)$, where $Q_i(t)$ is its current backlog. Therefore, we have $A_{i,\pi_i}(t) = \pi_i(t) \left(K - Q_i(t)\right)$ for this model.
In the second variation, we consider a finite blocking queue model for each server. In this model, each server accepts new jobs only if its backlog is zero. Therefore, we have $A_{i,\pi_i}(t) = K \pi_i(t) \Ind[{Q_{i}(t)} = 0]$ for this model, where $\Ind[\cdot]$ is the indicator function.
We will discuss the two variations in detail in Sections \ref{sec:model1} and \ref{sec:model2}.
In summary, including the departures $D_i(t)$ and dispatched jobs $A_{i,\pi_i}(t)$, we can express the queue backlog evolution as $\forall i = 1,...,N$, and $\forall t = 0,1,...$,
\begin{equation}\label{eq:queue_evol}
	Q_i(t+1) = \max\{Q_i(t) + A_{i,\pi_i}(t) - D_i(t), 0\}.
\end{equation}

The dispatcher does not have knowledge of the queue backlogs and makes its decision based on previous observations. Further, the dispatcher observes the backlog $Q_{i_\pi(t)}(t)$ only for the selected server $i_\pi(t)$ at each time $t$. Given server $i$'s history of observations and actions, denoted by $\mathcal{H}_i$, we define its belief vector $\bomega_i(t)$ as its conditional distribution given the history. Formally, $\forall i \in \{1,...,N\}$, we define
$$
	\bomega_i(t) := \left(\Pb[Q_i(t)=q \mid \mathcal{H}_i] \right)_{q=0}^K.
$$

Let $\bm{P}_i$ be server $i$'s probability transition matrix when it is not selected by the dispatcher $\pi_i(t) = 0$. Specifically, $\forall (q,j)\in\{0,...,K\}^2$, the $(q,j)$-th element of $\bm{P}_i$ is
$$
(\bm{P}_i)_{qj} := \Pb[Q_i(t+1) = j  \mid  Q_i(t) = q, \pi_i(t) = 0].
$$
In other words, $\bm{P}_i$ captures server $i$'s restless evolution due to departures and exogenous arrivals (if any) when server $i$ is not selected by the dispatcher. Therefore, $\bm{P}_i$ is dependent only on the distribution of server $i$'s departures and exogenous arrivals. 
Now, we can express the future belief $\bomega_i(t+1)$ recursively in terms of current belief $\bomega_i(t)$ and action $\pi_i(t)$ using the probability transition matrix as follows. If server $i$ is not selected ($\pi_i(t) = 0$), we can update the belief vector as $\bomega_i(t+1) = \bomega_i(t) \cdot \bm{P}_i$. Whereas, if server $i$ is selected ($\pi_i(t) = 1$), the dispatcher observes the backlog $Q_i(t)$ and sends an additional $A_{i,\pi_i}(t)$ jobs to the queue. At this point, we can temporarily set the belief vector to $\bm{\delta} \left(Q_i(t)+A_{i,\pi_i}(t) \right)$ where $\bm{\delta}(q)$ is a one-hot vector with one at the $q$-th index. However, we also need to account for the departures and  exogenous arrivals. Accounting for those, we can write the final belief vector update as $\bomega_i(t+1) = \bm{\delta}\left(Q_i(t)+A_{i,\pi_i}(t) \right) \cdot \bm{P}_i$ if $\pi_i(t) = 1$. In summary, we have
\begin{equation}\label{eq:belief_update}
\bomega_i(t+1) = \begin{cases}
	\bomega_i(t) \cdot \bm{P}_i & \text{if } \pi_i(t) = 0, \\
	\bm{\delta}\left(Q_i(t)+A_{i,\pi_i}(t) \right) \cdot \bm{P}_i & \text{if } \pi_i(t) = 1.
\end{cases}
\end{equation}

It has been previously shown \cite{belief_sufficient_stat} that the collection of belief vectors $\{\bomega_i(t), i\in\{1,...,N\}\}$ is a sufficient statistic for this problem. Hence, we focus on policies that take action $ \bpi(t)$ only as a function of the belief vectors $\bomega_i(t)$. However, each $\bomega_i(t)$ is a $K+1$ dimensional vector, and there are $N$ such vectors corresponding to $i=1,...,N$. To simplify this, we consider the following alternate representation of the belief vector. For each server $i$, we keep track of the most recent backlog observation $q_i(t)$ and the time since this observation $\tau_i(t)$. Notice that the belief vector can be recovered from these variables at any time using $\bomega_i(t) = \bm{\delta}(q_i(t)) \cdot \bm{P}_i^{\tau_i(t)}$. Further, we can update $q_i(t)$ and $\tau_i(t)$ recursively as
\begin{equation}\label{eq:age_update}
	\tau_i(t+1) = \begin{cases}
		\tau_i(t) + 1 & \text{if } \pi_i(t) = 0, \\
		1 & \text{if } \pi_i(t) = 1.
	\end{cases}
\end{equation}
\begin{equation}\label{eq:obs_update}
	q_i(t+1) = \begin{cases}
		q_i(t) & \text{if } \pi_i(t) = 0, \\
		Q_i(t)+A_{i,\pi_i}(t) & \text{if } \pi_i(t) = 1.
	\end{cases}
\end{equation}
Therefore, we define the state  $S_i(t)$ of server $i$ at time $t$ as the tuple of most recent observation and the time since that observation, $S_i(t) := (q_i(t), \tau_i(t))$. Also, we denote the state of the entire system by $\bm{S}(t) := (S_1(t), ..., S_N(t))$. Again, notice that the server states evolve even when they are not selected $\pi_i(t) = 0$. In other words, the servers are restless and hence, this is a restless multi-arm bandit (RMAB) problem.

At each time $t$, the dispatcher receives a reward equal to the number of new jobs dispatched,
$$R_\pi(t) := \sum_{i=1}^N A_{i,\pi_i}(t) = \sum_{i=1}^N \pi_i(t) A_{i,\pi_i}(t).$$
where, we explicitly show the dependency on $\pi_i(t)\in\{0,1\}$ to highlight that the dispatcher selects only one server.
Notice that the dispatcher must select the shortest queue to maximize this reward, since selecting the shortest queue allows it to dispatch the maximum number of jobs.
Our objective is to design a dispatcher that maximizes the expected total discounted reward given the initial state $\bm{S}(0)$. Formally, let $\Pi$ be the collection of all policies and $\beta\in(0,1)$ be the discount factor, our objective is
\begin{gather*}
	\max_{\pi\in\Pi} \Ex_\pi \left[ \sum_{t=0}^\infty \beta^t \sum_{i=1}^N \pi_i(t) A_{i,\pi_i}(t) \biggm| \bm{S(0)} \right] \\
	\text{such that }\sum_{i=1}^N \pi_i(t) = 1, \; \forall t=0,1,...
\end{gather*}

\section{Lagrangian Relaxation and the Single-Arm Decoupled Problem}\label{sec:relaxation}

Obtaining the optimal solution to the above RMAB optimization problem involves dynamic programming and is computationally infeasible. Whittle \cite{whittle} considered the Lagrangian relaxation of the RMAB problem and decoupled the arms resulting in $N$ independent optimization problems. We consider the following relaxation of our problem.
\begin{gather*}
	\max_{\pi\in\Pi} \Ex_\pi \left[ \sum_{t=0}^\infty \beta^t \sum_{i=1}^N \pi_i(t) A_{i,\pi_i}(t) \mid \bm{S(0)} \right] \\
	\text{such that }\Ex_\pi \left[  \sum_{t=0}^\infty \beta^t \sum_{i=1}^N \left(1-\pi_i(t)\right) \mid \bm{S(0)} \right] = \frac{N-1}{1-\beta}
\end{gather*}

Using Lagrange multiplier $\gamma$, we have
\begin{equation*}
	\max_{\pi\in\Pi} \Ex_\pi \left[ \sum_{t=0}^\infty \beta^t \sum_{i=1}^N \left[ \pi_i(t) A_{i,\pi_i}(t) + \gamma \left(1-\pi_i(t)\right) \right] \mid \bm{S(0)} \right]
\end{equation*}


We can decouple the above optimization into $N$ independent single-arm problems. Dropping the subscript $i$, we can write the single-arm decoupled problem as
\begin{equation*}
	\max_{\pi\in\Pi} \Ex_\pi \left[ \sum_{t=0}^\infty \beta^t \left[\pi(t) A_\pi(Q(t)) + \gamma \left(1-\pi(t)\right)\right] \mid S(0) \right]
\end{equation*}
where, we write the number of dispatched jobs as $ A_\pi(Q(t))$ to emphasize its dependency on the current backlog $Q(t)$.
We can interpret this single arm decoupled problem as follows. Given the single-arm state $S(t) = (q, \tau)$, the dispatcher can either probe the server $\pi(t) = 1$ or stay idle $\pi(t) = 0$. If the dispatcher probes, it observes the current backlog $Q(t)$, sends $A_\pi(Q(t))$ packets, updates its state to  $S(t+1) = \left( Q(t) + A_\pi(Q(t)), 1 \right)$, and receives a reward $A_\pi(Q(t))$. Whereas if the dispatcher stays idle, it does not observe the backlog, updates its state to $S(t+1) = (q, \tau+1)$, and receives a constant subsidy $\gamma$. Let $V_\gamma(q,\tau)$ be the value function for state $(q,\tau)$ i.e. the optimal expected discounted reward received starting from an initial state $S(0) = (q,\tau)$. Using Bellman equations, we can express the value function as, $\forall q\in\{0,...,K\}$ and $\forall \tau\in\{1,2,...\}$,  
\begin{multline}\label{eq:value_func_general}
	V_\gamma(q,\tau) = \max \Bigg\{\gamma + \beta 	V_\gamma(q,\tau+1), \\
		\sum_{q'=0}^K P_{qq'}(\tau) \left(A_\pi(q') + \beta V_\gamma(q'+A_\pi(q'),1) \right) \Bigg\}
\end{multline}
where, $P_{qq'}(\tau)$ is the probability that the current backlog is $q'$ given that the backlog was $q$ some $\tau$ time-steps ago. This can be obtained from the probability transition matrix as $P_{qq'}(\tau):= (\bm{P}^\tau)_{qq'}$. Notice that the value function is given by the maximum of the values received by idling and probing. When the dispatcher idles, it receives a constant subsidy $\gamma$, and updates its state to $(q,\tau+1)$. Whereas when it probes, it finds the backlog at $q'$ with probability $P_{qq'}(\tau)$, receives a reward $A_\pi(q')$, and updates its state to $\left( q'+A_\pi(q'), 1 \right)$.


After relaxing the RMAB problem and decoupling the arms, Whittle \cite{whittle} proposed the heuristic Whittle Index policy whenever the problem is \textit{indexable}. The indexability property states that as the subsidy $\gamma$ increases from $0$ to $\infty$, the set of states for which it is optimal to idle increases monotonically from the empty set $\phi$ to the entire state space. Further, given indexability, the \textit{Whittle index} for state $S$ is defined as the lowest subsidy $\gamma$ that makes it equally desirable to activate and idle in state $S$.
In the next two sections, we consider two queueing models and study their indexability. Our approach involves showing that the optimal solution to the single-arm decoupled problem has a threshold structure and then verifying that the threshold values are monotonic with the subsidy. However, to show threshold optimality, we require the transition matrix $\bm{P}$ to be lower triangular. Hence, we assume that there are no exogenous arrivals to the servers, and the only arrivals to the server are sent from the dispatcher.
\begin{assumption}\label{assump:lower_triangular_P}
	The transition matrix $\bm{P}$ is lower triangular.
\end{assumption}

\section{Model 1: Finite Queues}  \label{sec:model1}
In this model, the server has a finite queue with maximum backlog capacity $K$. Since the dispatcher's objective is to find the shortest queue and maximize the number of jobs served, it sends as many jobs to the selected server  as possible. However, since the server is finite with capacity $K$, the number of new jobs dispatched is
$$A_\pi(Q(t)) =  \pi(t) \left(K - Q(t)\right).$$
In other words, whenever the dispatcher probes the server, it dispatches $K - Q(t)$ new jobs, thereby filling up the queue to $K$. 
This is possible due to our assumption that there are at least $K$ jobs available at the dispatcher at all times.
Further, since the dispatcher always fills up the selected queue, the last seen backlog is always $q(t) = K$. Hence, the system state is always $S(t) = (K, \tau(t))$ at all times $t$, where $\tau(t)$ is the time since last filled. Since $q(t)=K$ is always fixed, it suffices to keep track of only $\tau(t)$. This essentially makes the system state one-dimensional and allows us to solve the Bellman equation in closed-form.
We start by simplifying the value function \eqref{eq:value_func_general} by plugging in this model's $A_\pi(Q(t))$ as, $\forall \tau\in\{1,2,...\}$,
\begin{multline} \label{eq:value_func_model1}
		V_\gamma(\tau) = \max \Bigg\{\gamma + \beta V_\gamma(\tau+1), \\ \sum_{q'=0}^K P_{Kq'}(\tau) \left(K - q' + \beta V_\gamma(1)\right)  \Bigg\}
\end{multline}
where, we have dropped the value function's dependency on $q$ since $q(t)=K$ for all $t$. We now show that the optimal solution $\pi^*(\tau)$ to this value function has a threshold structure in $\tau$. We denote the conditional expected queue backlog given $\tau$ (the time since last filled) by $E(\tau):= \sum_{q=0}^K q \cdot P_{Kq}(\tau)$ with $E(0)=K$. This expectation has the following properties.
\begin{lemma}\label{lemma:E_monotone}
For all $\tau=1,2,...$, $E(\tau)$ is non-increasing in $\tau$ and $E(\tau) - E(\tau-1)$ is non-increasing in $\tau$.
\end{lemma}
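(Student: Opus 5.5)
The plan is to write $E(\tau)$ through the sample-path recursion of the queue under idling, and then compare consecutive increments by coupling. By Assumption~\ref{assump:lower_triangular_P} there are no exogenous arrivals. Starting from $Q(0)=K$ with the arm idle, the backlog therefore evolves as $Q(s+1)=\max\{Q(s)-D(s),0\}$ with i.i.d.\ Poisson$(\mu)$ departures, and $E(\tau)=\Ex[Q(\tau)]$, where $P_{Kq}(\tau)=(\bm{P}^\tau)_{Kq}$ is exactly the law of $Q(\tau)$. Equivalently, $Q(\tau)=(K-S_\tau)^+$ with $S_\tau=\sum_{s<\tau}D(s)\sim$ Poisson$(\mu\tau)$.

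\textbf{Step 1 (monotonicity).} Since $Q(\tau)\le Q(\tau-1)$ pathwise, taking expectations gives $E(\tau)\le E(\tau-1)$ for all $\tau\ge1$, including $\tau=1$ with $E(0)=K$.

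\textbf{Step 2 (increment identity).} From $Q(\tau)=Q(\tau-1)-\min\{Q(\tau-1),D(\tau-1)\}$ we get
\begin{equation*}
E(\tau)-E(\tau-1)=-m(\tau),\qquad m(\tau):=\Ex\bigl[\min\{Q(\tau-1),D(\tau-1)\}\bigr].
\end{equation*}
Here $D(\tau-1)$ is independent of $Q(\tau-1)$ and has the same law for every $\tau$. The claim about $E(\tau)-E(\tau-1)$ therefore reduces entirely to a monotonicity statement about $m(\tau)$.

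\textbf{Step 3 (comparing consecutive $m$).} Couple $Q(\tau-1)$ and $Q(\tau)$ on one sample path, so that $Q(\tau)\le Q(\tau-1)$, and use a common fresh departure $D$ independent of both. Since $x\mapsto\Ex[\min\{x,D\}]$ is non-decreasing, this pathwise ordering compares $m(\tau+1)$ with $m(\tau)$. An alternative route uses the explicit formula $m(\tau)=\sum_{j\ge1}\Pb[D\ge j]\,\Pb[Q(\tau-1)\ge j]$ together with the Poisson tail $\Pb[S_{\tau-1}\le K-j]$, which is monotone in $\tau$.

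\textbf{Main obstacle.} The hard step is matching the direction of this comparison to the stated inequality. The coupling in Step 3 gives $m(\tau+1)\le m(\tau)$: the expected one-step drop shrinks as the queue drains. Under the identity of Step 2 this yields $E(\tau+1)-E(\tau)\ge E(\tau)-E(\tau-1)$, which is the opposite direction to the statement as worded.

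So I would first recheck the identity in Step 2 and the first nontrivial case. With $K=1$, $E(\tau)=e^{-\mu\tau}$, which is convex in $\tau$, so its increments are non-decreasing. If these checks hold, the statement cannot be proved literally as written. The only true version is the one for the per-step expected drop $E(\tau-1)-E(\tau)$, which is non-increasing. My plan would then be to establish that form and verify that it is the form the threshold argument in this section actually uses, rather than claim a proof of the literal inequality.
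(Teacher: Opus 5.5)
Your proposal is correct and follows essentially the paper's own proof. The paper also writes the one-step drop as $E(\tau)-E(\tau+1)=\Ex[\min\{Q(t),D(t)\}]\ge 0$ and compares consecutive drops using $Q(t-1)\ge Q(t)$ and i.i.d.\ departures.

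Your sign diagnosis is also right. The paper's proof actually concludes $E(\tau-1)-E(\tau)\ge E(\tau)-E(\tau+1)$, i.e.\ convexity, as the sentence after the lemma says, and this is the form used in the proofs of Theorem~\ref{thm:model1_threshold} and of indexability. The literal wording already fails for $K=1$, where $E(\tau)=e^{-\mu\tau}$.
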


This lemma can be proven using the fact that the expected backlog $E(\tau)$ is convex in $\tau$ since the departures are non-negative.
\ifisreport
We show the proof in Appendix \ref{proof:monotonicity_of_E}.
\else
A detailed proof can be found in \cite{tech_report}.
\fi
Notice that since the activation reward $K-Q(t)\in[0,K]$, if the subsidy $\gamma>K$, it is optimal to always idle and similarly if the subsidy $\gamma<0$, it is optimal to always probe. Therefore, we focus on the non-trivial case $\gamma \in [0,K]$ for the rest of the analysis. We now present the threshold optimality result.
\begin{theorem}\label{thm:model1_threshold}
	For any $\tau \in\{1,2,...\}$, the optimal action is
	\begin{equation*}
		\pi^*(\tau) = \begin{cases}
			1 & \text{ for } \tau \geq \tau^*,\\
			0 & \text{ for } \tau < \tau^*
		\end{cases}
	\end{equation*}
	where, $\tau^*\in\{1,2,...\}$ is such that $W(\tau^*) \leq \gamma \leq W(\tau^* + 1)$ and $W:\{1,2,...\}\rightarrow[0,K]$ is defined as
	\begin{equation*}
		W(\tau) := K - \left(\frac{1-\beta^\tau}{1-\beta}\right) E(\tau-1) + \beta \left(\frac{1-\beta^{\tau-1}}{1-\beta}\right)E(\tau).
	\end{equation*}
\end{theorem}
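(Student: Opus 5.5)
Since the state is one-dimensional, the natural route is to exhibit an explicit candidate value function for the threshold policy and verify that it satisfies the Bellman equation \eqref{eq:value_func_model1}. The probe reward at age $\tau$ is $\sum_{q'} P_{Kq'}(\tau)(K-q') = K - E(\tau)$. So probing at age $\tau$ yields $K - E(\tau) + \beta V_\gamma(1)$.

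\textbf{Step 1: value of a threshold policy.} Fix a candidate threshold $T\geq 1$ and consider the policy that idles for $\tau<T$ and probes for $\tau\geq T$. Starting from age $1$, it idles $T-1$ times, collecting $\gamma\frac{1-\beta^{T-1}}{1-\beta}$, then probes, collecting $\beta^{T-1}(K-E(T))$, and returns to age $1$. This renewal structure gives
\begin{equation*}
	V^T(1) = \frac{\gamma\frac{1-\beta^{T-1}}{1-\beta} + \beta^{T-1}(K-E(T))}{1-\beta^{T}} .
\end{equation*}
The values at other ages follow directly: $V^T(\tau)=K-E(\tau)+\beta V^T(1)$ for $\tau\ge T$, and $V^T(\tau)=\gamma\frac{1-\beta^{T-\tau}}{1-\beta}+\beta^{T-\tau}V^T(T)$ for $\tau<T$.

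\textbf{Step 2: comparing consecutive thresholds.} Compare thresholds $T$ and $T+1$ through $V^{T+1}(1)-V^{T}(1)$. Cross-multiplying the renewal formula, the sign of this difference should reduce to the sign of $\gamma - W(T+1)$. The quantity $W(\tau)$ is exactly the subsidy that equalizes probing at age $\tau$ with idling once more and probing at age $\tau+1$, with both continuations restarting from the same cycle. The factors $\frac{1-\beta^\tau}{1-\beta}$ and $\beta\frac{1-\beta^{\tau-1}}{1-\beta}$ come from the cycle-length normalizations $1-\beta^T$ and $1-\beta^{T+1}$. I will carry out this algebra carefully to confirm that it reproduces the stated $W$, possibly up to an index shift in the convention for $\tau^*$.

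\textbf{Step 3: monotonicity of $W$.} Show that $W(\tau)$ is non-decreasing in $\tau$ by invoking Lemma~\ref{lemma:E_monotone}. Writing $W(\tau+1)-W(\tau)$ as a combination of $E(\tau)-E(\tau-1)$ and $E(\tau+1)-E(\tau)$ with geometric weights, non-increasing increments (convexity) together with $E$ non-increasing should yield nonnegativity. Two boundary facts are also needed: $W(1)=K-E(0)=0\le\gamma$, and $W(\tau)\to$ a limit that is at least $\gamma$ or else the threshold is infinite. Together these ensure that $\tau^*$ with $W(\tau^*)\le\gamma\le W(\tau^*+1)$ exists for $\gamma\in[0,K]$.

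\textbf{Step 4: verifying the Bellman equation.} Take $T=\tau^*$ and check that $V^{\tau^*}$ satisfies \eqref{eq:value_func_model1}, which gives optimality by uniqueness of the fixed point of the contraction. Two one-step deviation inequalities are required:
\begin{itemize}
	\item For $\tau\ge\tau^*$: probing beats idling, i.e.\ $K-E(\tau)+\beta V(1)\ge\gamma+\beta(K-E(\tau+1)+\beta V(1))$.
	\item For $\tau<\tau^*$: idling beats probing, i.e.\ $V(\tau)\ge K-E(\tau)+\beta V(1)$.
\end{itemize}
Using the closed form of $V(1)$, each inequality should reduce to a comparison of $\gamma$ with some $W(\tau')$. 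By monotonicity of $W$, this comparison follows from $W(\tau^*)\le\gamma\le W(\tau^*+1)$.

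I expect Step 4 to be the main obstacle, specifically the algebra showing that these local inequalities at arbitrary $\tau$ reduce to conditions involving $W$ evaluated at the threshold. It is also where convexity of $E$ is genuinely used, not just its monotonicity. I would first handle $\tau\ge\tau^*$, where the inequality is $\gamma\le K(1-\beta)-E(\tau)+\beta E(\tau+1)+(1-\beta)\beta V(1)$. The right-hand side is non-decreasing in $\tau$ by convexity of $E$, so it suffices to check $\tau=\tau^*$, which should be equivalent to $\gamma\le W(\tau^*+1)$. For $\tau<\tau^*$, I would use backward induction from $\tau^*$, reducing to $\gamma\ge W(\tau^*)$ together with the same monotonicity.
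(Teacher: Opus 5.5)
Your proposal is correct and follows essentially the paper's route. You compute the threshold policy's value through the renewal at age $1$, then verify the Bellman equation by showing that probe-minus-idle is $\ge 0$ for $\tau\ge\tau^*$ and $\le 0$ for $\tau<\tau^*$; the boundary cases reduce exactly to $\gamma\le W(\tau^*+1)$ and $\gamma\ge W(\tau^*)$, and convexity of $E$ (Lemma~\ref{lemma:E_monotone}) propagates them to all $\tau$. The differences are minor: for $\tau<\tau^*$ you use backward induction with monotonicity of $E(\tau)-\beta E(\tau+1)$ instead of the paper's closed-form bound on the difference, and your Steps 2--3 are not needed for this theorem (the paper proves monotonicity of $W$ only later, for indexability).
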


The proof involves showing that the value function under the policy $\pi^*(\tau)$ given in theorem's statement satisfies the Bellman equations \eqref{eq:value_func_model1}.
\ifisreport
We show the proof in Appendix \ref{proof:model_1_threshold}.
\else
A detailed proof can be found in \cite{tech_report}.
\fi
This theorem states that, once we probe the server, it is optimal to wait a fixed $\tau^*$ amount of time before probing it again. Next, we present the Whittle indexability for this model.
\begin{theorem}
	The job dispatching problem under the standard finite queues model (Model 1) is indexable.
\end{theorem}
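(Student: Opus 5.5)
Indexability holds if the passive set $\mathcal{P}(\gamma)=\{\tau : \pi^*(\tau)=0\}$ grows monotonically with $\gamma$, from $\emptyset$ to $\{1,2,\dots\}$. By Theorem~\ref{thm:model1_threshold}, $\mathcal{P}(\gamma)=\{1,\dots,\tau^*(\gamma)-1\}$, where $\tau^*(\gamma)$ satisfies $W(\tau^*)\le\gamma\le W(\tau^*+1)$. It therefore suffices to show that $W(\tau)$ is non-decreasing in $\tau$. Then $\tau^*(\gamma)$ is non-decreasing in $\gamma$, so the passive sets are nested. We also check the endpoints: $W(1)=K-E(0)=0$, so for $\gamma\le 0$ we get $\tau^*=1$ and the passive set is empty. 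And $W(\tau)\to K$ as $\tau\to\infty$, or at least $\sup_\tau W(\tau)\le K$, so for $\gamma\ge K$ every state is passive, consistent with the remark preceding Theorem~\ref{thm:model1_threshold}. As a by-product, the Whittle index of state $\tau$ is $W(\tau)$, since at $\gamma=W(\tau)$ the two actions are equally good in state $\tau$.

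The main step is monotonicity of $W$. Writing $S_n:=\frac{1-\beta^n}{1-\beta}=\sum_{k=0}^{n-1}\beta^k$, we compute
\begin{equation*}
W(\tau+1)-W(\tau) = -S_{\tau+1}E(\tau)+\beta S_\tau E(\tau+1)+S_\tau E(\tau-1)-\beta S_{\tau-1}E(\tau).
\end{equation*}
Using $S_{\tau+1}=1+\beta S_\tau$ and $\beta S_{\tau-1}=S_\tau-1$, the coefficient of $E(\tau)$ becomes $-(1+\beta S_\tau)-(S_\tau-1)=-(1+\beta)S_\tau$. Hence
\begin{equation*}
W(\tau+1)-W(\tau)=S_\tau\Big[\big(E(\tau-1)-E(\tau)\big)-\beta\big(E(\tau)-E(\tau+1)\big)\Big].
\end{equation*}
By Lemma~\ref{lemma:E_monotone}, the decrements $d_\tau:=E(\tau-1)-E(\tau)\ge 0$ are non-increasing, so $d_{\tau+1}\le d_\tau$. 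Since $\beta<1$, the bracket is at least $(1-\beta)d_\tau\ge0$, and therefore $W$ is non-decreasing. The only delicate part is this telescoping algebra, specifically getting the coefficient of $E(\tau)$ right; once that is settled, the lemma closes the argument.

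For the limit, note that $E$ is non-increasing and bounded below by $0$, so it converges to some $E_\infty$ and $d_\tau\to0$. Then $W(\tau)\to K-\frac{E_\infty}{1-\beta}+\frac{\beta E_\infty}{1-\beta}=K-E_\infty$. If $E_\infty>0$, the tail states are never passive for $\gamma\in(K-E_\infty,K)$. This is still consistent with indexability, because monotonicity of the passive set is all that is required, and for $\gamma\ge K$ idling is optimal everywhere. With Poisson departures, $E_\infty=0$ anyway.

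Finally, we handle ties. When $W$ is constant over a range of $\tau$, Theorem~\ref{thm:model1_threshold} allows several thresholds. In that case we take the largest admissible $\tau^*$, or equivalently define the passive set as the states where idling is strictly optimal. With either convention the sets stay nested in $\gamma$.
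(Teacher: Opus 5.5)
Your proof is correct and follows the paper's argument: the same identity $W(\tau+1)-W(\tau)=\frac{1-\beta^\tau}{1-\beta}\bigl[(E(\tau-1)-E(\tau))-\beta(E(\tau)-E(\tau+1))\bigr]$, nonnegativity from the convexity in Lemma~\ref{lemma:E_monotone}, and hence $\tau^*(\gamma)$ non-decreasing in $\gamma$. Two of your side remarks, which the theorem does not need, are wrong: at $\gamma=W(\tau)$ the indifferent state is $\tau-1$, not $\tau$, so the Whittle index of state $\tau$ is $W(\tau+1)$ (state $1$ has index $W(2)=K-(1+\beta)E(1)+\beta E(2)$, not $W(1)=0$); and if $E_\infty>0$, then for $\gamma>K-E_\infty$ every probe earns less than $\gamma$ in expectation, so idling is strictly optimal in \emph{every} state, and it is not true that the tail states are never passive.
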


\begin{IEEEproof}
	We start by showing that $W(\tau)$ is non-decreasing in $\tau\in\{1,2,...\}$. Evaluating $W(\tau+1)-W(\tau)$ from $W(\tau)$'s definition, simplifying it, and using Lemma \ref{lemma:E_monotone}, we have $\forall \tau$,
	\begin{align*}
		W(\tau&+1)-W(\tau)\\
		&= \frac{1-\beta^\tau}{1-\beta} [ \{E(\tau-1) - E(\tau)\} - \beta \{E(\tau) - E(\tau+1)\} ] \\
		& \geq (1-\beta^\tau) [E(\tau) - E(\tau+1)] \geq 0.
	\end{align*}

	Now, from Theorem \ref{thm:model1_threshold}, we know that $\tau^* \in\{1,2,...\}$ and $W(\tau^*) \leq \gamma \leq W(\tau^* + 1).$ Hence, due to $W$'s monotonicity, we have that the threshold $\tau^*$ is non-decreasing in $\gamma$. This implies Whittle indexability due to threshold optimality.
\end{IEEEproof}

Finally, notice that model 1's process is resetting i.e. under the active action (dispatcher probes the server), the state always resets to $(K,1)$. Therefore, previous techniques \cite{whittle_condition_akbarzadeh} can be applied to verify indexability alternatively. However, our method above derives the Whittle index in closed form and expresses it in terms of the expected values.

\section{Model 2: Blocking Finite Queues}  \label{sec:model2}
In this model, the server is blocking which means it accepts new jobs only if it is empty. Therefore, the number of new jobs dispatched is
$$A_\pi(Q(t)) := K \Ind[Q(t) = 0].$$

At time $t$, if the dispatcher probes the server and finds it empty, it sends $K$ new jobs, receives a reward of $K$, and updates its state to $(K,1)$. Whereas if it finds the server non-empty, it does not receive any reward, but still gets to observe the current backlog (say $q'$), and updates its state to $(q',1)$. Plugging in this model's $A_\pi(Q(t))$, we can simplify the value function \eqref{eq:value_func_general} as, $\forall q\in\{0,...,K\}$ and $\forall \tau\in\{1,2,...\}$,
\begin{equation}\label{eq:value_func_model2}
	V_\gamma (q,\tau) = \max \left\{\hspace{-0.5mm} \gamma + \beta V_\gamma (q, \tau+1), \sum_{q'=0}^{K} P_{qq'}(\tau) \wtilde{V}_\gamma(q') \hspace{-0.5mm} \right\}
\end{equation}
where, we use the following notation for convenience
\begin{equation*}
\wtilde{V}_\gamma(q') := \begin{cases}
	\beta V_\gamma(q',1) & \text{ for }q'\in\{1,...,K\},\\
	K + \beta V_\gamma(K,1) &  \text{ for }q'=0.
\end{cases}
\end{equation*}
In the rest of this section, we show that the optimal solution to this problem is a threshold policy in $\tau$ and derive indexability of this model for some special cases.


\subsection{Threshold Optimality}
Models similar to the single-arm version of the blocking queues model have been studied previously in the machine repair literature. \cite{rosenfield_machine, rosenfield_machine_2} study the optimal policy to maintain a deteriorating machine under available actions: idle, inspect, and repair. \cite{levin_machine} studies a similar problem under idle and terminate actions for finite time horizon. It has been shown that the optimal policy is switch-type in terms of the time since last observation. We use some results from these works to show threshold optimality of the blocking queues model. Specifically, we use the following lemmas from \cite{rosenfield_machine, levin_machine}.
\begin{lemma}[Single Crossing \cite{rosenfield_machine}] \label{lemma:rosenfield}
	Consider a sequence $x_j$, $j=0,...,K$ such that $x_j$ crosses zero only once
	and it does so from above (i.e.,  $\exists$ $j'$ such that $x_{j}>0$ for $j\leq j'$ and $x_{j}<0$ for $j>j'$).
	Denote $G(q,\tau):=\sum_{q'=0}^K P_{qq'}(\tau) x_{q'}$. Under Assumption \ref{assump:lower_triangular_P} on $\bm{P}$,  for any $(q,\tau)$,
	\begin{enumerate}
		\item if $G(q,\tau)>0$, then $G(q,\tau')>0$ for all $\tau'\geq\tau$,
		\item if $G(q,\tau)<0$, then $G(q',\tau)<0$ for all $q'\geq q$.
	\end{enumerate}
\end{lemma}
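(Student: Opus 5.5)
The plan is to prove a single structural fact and derive both claims from it:
\begin{quote}
\emph{for every fixed $\tau\geq 0$, the sequence $q\mapsto G(q,\tau)$ is itself single crossing from above in $q$.}
\end{quote}
Here we set $G(q,0):=x_q$. Claim 2 of Lemma~\ref{lemma:rosenfield} is then exactly this fact. Claim 1 will follow from it using the lower-triangular structure of Assumption~\ref{assump:lower_triangular_P}.

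\textbf{The inductive step.} I will use the semigroup identity $\bm{P}^{\tau+1}=\bm{P}\,\bm{P}^{\tau}$, which gives
\begin{equation*}
G(q,\tau+1)=\sum_{j=0}^{K} P_{qj}\, G(j,\tau).
\end{equation*}
To show that single crossing from above is preserved in $q$, I will show that $\bm{P}$ is totally positive of order two (TP2), i.e.\ $P_{qj}P_{q'j'}\geq P_{qj'}P_{q'j}$ for $q<q'$ and $j<j'$. Karlin's variation-diminishing property then says that a TP2 kernel maps a sequence with at most one sign change to a sequence with at most one sign change. The direction of the crossing (positive to negative as the index increases) is also preserved, because the kernel is monotone in the same orientation. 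Induction on $\tau$ then gives the structural fact for all $\tau$.

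\textbf{Verifying TP2.} With no exogenous arrivals and Poisson departures, $Q(t+1)=\max\{Q(t)-D,0\}$. Hence:
\begin{itemize}
\item for $j\geq 1$, $P_{qj}=f(q-j)$, where $f$ is the Poisson$(\mu)$ pmf (zero at negative arguments);
\item the $j=0$ column is the tail $\bar F(q)=\Pb[D\geq q]$.
\end{itemize}
Since $f$ is log-concave, the translation kernel $f(q-j)$ is TP2 (a P\'olya frequency sequence of order two). The tail of a log-concave pmf is also log-concave, which handles the $2\times 2$ minors that involve column $0$. I expect this verification to be the main obstacle, in particular the minors mixing column $0$ with the other columns.

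This step cannot be bypassed with a weaker argument. Mere stochastic monotonicity of the rows of $\bm{P}^{\tau}$ in $q$ and $\tau$ does \emph{not} preserve the sign of $\sum_j p_j x_j$ for a single-crossing $x$. For example, take $x=(1,-10,-0.1)$. The distribution $(0.5,0,0.5)$ gives a positive sum, while the stochastically smaller $(0.45,0.55,0)$ gives a negative one. So some TP2 or log-concavity structure of the departure law has to be used, beyond lower triangularity alone. If a fully general departure law were intended, this is where additional hypotheses would enter.

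\textbf{Deriving Claim 1.} Suppose $G(q,\tau)>0$. By the structural fact at time $\tau$, $G(j,\tau)>0$ for all $j\leq q$. Lower triangularity gives $P_{qj}=0$ for $j>q$. Therefore
\begin{equation*}
G(q,\tau+1)=\sum_{j\leq q}P_{qj}\,G(j,\tau)
\end{equation*}
is a convex combination of strictly positive numbers, with weights summing to one, and so $G(q,\tau+1)>0$. Iterating in $\tau$ yields $G(q,\tau')>0$ for all $\tau'\geq\tau$. This completes both parts.
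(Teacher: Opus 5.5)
Your route (TP2 of $\bm{P}$, then variation diminishing, then lower triangularity for Claim~1) is the total-positivity argument the paper defers to Rosenfield. You are also right that lower triangularity alone cannot suffice. The gap is strictness.

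Karlin's theorem controls only $S^-$, the number of sign changes after discarding zeros. So your induction shows that $G(\cdot,\tau)$ has at most one sign change, from $+$ to $-$, but possibly with zeros. It does not show single crossing in the strict sense of the lemma, and that strict version is in fact false for $\tau\geq 1$. Take $x_0=1$, $x_1=-(e^{\mu}-1)$ and $x_j<0$ for $j\geq 2$; then $G(1,1)=(1-e^{-\mu})-e^{-\mu}(e^{\mu}-1)=0$. Consequently, ``Claim~2 is exactly this fact'' only yields $G(q',\tau)\le 0$. Likewise, the assertion $G(j,\tau)>0$ for all $j\le q$ in your Claim~1 step is not delivered by the tool you invoke. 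Both strict inequalities are used downstream, in Proposition~\ref{prop:L_structure} and in the proof of Theorem~\ref{thm:model2_threshold}.

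Both steps can be repaired with what you already have, plus positivity of the Poisson law, which your write-up never uses.
\begin{itemize}
\item For Claim~1, the weak fact gives $G(j,\tau)\ge 0$ for $j<q$. Hence $G(q,\tau+1)\ge P_{qq}G(q,\tau)>0$, since $P_{qq}=e^{-\mu}>0$ for $q\geq 1$ and $P_{00}=1$.
\item For Claim~2, skip the induction and apply TP2 of $\bm{P}^{\tau}$ (a product of TP2 matrices) directly to the strictly crossing $x$. Let $p,p'$ be the rows of $\bm{P}^{\tau}$ indexed by $q<q'$. Then $G(q,\tau)<0$ forces $p_k>0$ for some $k>j'$, so $c:=\min\{p'_k/p_k : k>j',\ p_k>0\}$ is finite. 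TP2 gives $p'_j\le c\,p_j$ for $j\le j'$ (a zero $p_j$ forces $p'_j=0$) and $p'_k\ge c\,p_k$ for $k>j'$. Hence $G(q',\tau)\le c\,G(q,\tau)<0$ when $c>0$. If $c=0$, then $p'$ puts all its mass where $x_k<0$, so again $G(q',\tau)<0$.
\end{itemize}

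Two smaller remarks.
\begin{itemize}
\item The column-$0$ minors are $\bar F(q)f(q'-j)\ge f(q-j)\bar F(q')$, which mix $f$ and $\bar F$. They follow termwise from log-concavity of $f$ via $\bar F(q)=\sum_{k\ge 0}f(q+k)$, or by composing the translation kernel with the monotone lumping kernel $\Ind[j=\max\{m,0\}]$. Log-concavity of $\bar F$ alone does not give them.
\item Your illustrative pair $(0.45,0.55,0)$, $(0.5,0,0.5)$ is not stochastically ordered: the tails at $0$ are $0.55$ versus $0.5$. The point survives with $(0.5,0.5,0)\le_{\mathrm{st}}(0.4,0,0.6)$, which give $-4.5$ and $0.34$.
\end{itemize}
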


This lemma is due to the total positivity property of the Poisson probability transition matrix. See \cite{rosenfield_machine} for proof.

\begin{lemma}[Information Inequality \cite{levin_machine}] \label{lemma:levin} $\forall q\in\{0,...,K\}$,
	$$V_\gamma(q,2) \leq \sum_{q'=0}^K P_{qq'}(1)V_\gamma(q',1)$$
\end{lemma}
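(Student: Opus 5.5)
The plan is to prove Lemma \ref{lemma:levin} with a ``more information cannot hurt'' argument. I would present it as a direct simulation (coupling) argument and keep convexity of the value function in the belief as a backup route. The key identity is that the belief in state $(q,2)$ is $\bm{\delta}(q)\bm{P}^2 = \sum_{q'} P_{qq'}(1)\,\bm{\delta}(q')\bm{P}$. In words, it is exactly the mixture, with weights $P_{qq'}(1)$, of the beliefs of the states $(q',1)$. So $(q,2)$ describes a server whose backlog one step ago, $Q(-1)$, is distributed as $P_{q\cdot}(1)$, with that backlog hidden from the dispatcher. The state $(q',1)$ describes the same situation with $Q(-1)=q'$ revealed.

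\textbf{Step 1.} Fix $\gamma$ and $q$, and let $\pi^\star$ be an optimal policy for the single-arm problem started at $(q,2)$. By the sufficiency result \cite{belief_sufficient_stat}, I may view $\pi^\star$ as a (possibly history-dependent) rule mapping the sequence of past actions and observed backlogs to the next action. This rule never uses any information beyond what the dispatcher sees after time $0$.

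\textbf{Step 2.} For each $q'$, define a policy $\hat\pi$ from the initial state $(q',1)$ that simply runs $\pi^\star$ and ignores the extra knowledge of $q'$. This is admissible because the actions depend only on observations made from time $0$ onward. Under $\hat\pi$, the true backlog trajectory and the resulting rewards $\pi(t)A_\pi(Q(t))+\gamma(1-\pi(t))$ are driven entirely by:
\begin{itemize}
\item the actual backlog $Q(0)\sim \bm{\delta}(q')\bm{P}$;
\item the departures, which are independent of everything else;
\item the actions, which are the same functions of the observation history as under $\pi^\star$.
\end{itemize}
Therefore the conditional expected discounted reward of $\hat\pi$ given $Q(-1)=q'$ equals the expected reward of $\pi^\star$ conditioned on $Q(-1)=q'$. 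This gives
\begin{equation*}
\Ex_{\pi^\star}\!\left[\textstyle\sum_t \beta^t r_t \mid Q(-1)=q'\right] \le V_\gamma(q',1).
\end{equation*}

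\textbf{Step 3.} Average over $q'$ with weights $P_{qq'}(1)$, which is the law of $Q(-1)$ under the state $(q,2)$. The tower property, together with the mixture identity above, turns the left-hand side into the value of $\pi^\star$ from $(q,2)$, which is $V_\gamma(q,2)$. The right-hand side becomes $\sum_{q'}P_{qq'}(1)V_\gamma(q',1)$, which proves the lemma.

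The step I expect to be the main obstacle is making Step 2 rigorous. One must check that conditioning on $Q(-1)=q'$ leaves the joint law of backlogs, observations and rewards under the history-based rule exactly equal to the law seen from the initial belief $\bm{\delta}(q')\bm{P}$. This follows because the backlog process is Markov in $Q$, with transitions depending only on the action and the independent departures. It also requires that $V_\gamma(q',1)$ is the supremum over all history-dependent policies, not only belief-based ones, which is again the sufficiency statement. If this bookkeeping becomes awkward, I would switch to the standard POMDP route. That route shows by value iteration that the finite-horizon value functions are convex (a maximum of functions linear) in the belief vector $\bomega$, passes to the limit, and applies Jensen's inequality to the mixture identity.
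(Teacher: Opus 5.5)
Your argument is correct. It is a formalization of the reasoning the paper itself gives for this lemma, namely that an oracle revealing the intermediate backlog $Q(-1)$ can only help; the paper defers the formal proof to Levin's Lemma 3.2.1.

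The step you flag as the main obstacle is actually easy here. No observation arrives before the first probe, so $V_\gamma(q,2)=\max_{l\ge 0}L_{(q,2)}(l)$. The identity $\bm{P}\,\bm{P}^{1+l}=\bm{P}^{2+l}$ then gives $L_{(q,2)}(l)=\sum_{q'}P_{qq'}(1)L_{(q',1)}(l)\le\sum_{q'}P_{qq'}(1)V_\gamma(q',1)$, where the inequality follows directly from the Bellman equation. The never-probe case follows from $V_\gamma(q',1)\ge\gamma/(1-\beta)$.
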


This lemma has a rather intuitive interpretation. Consider a situation where the backlog was observed to be $q$ two time-steps ago. The lemma says that the total value obtained can only improve if an oracle were to reveal the intermediate backlog one time-step ago. See \cite{levin_machine} (Lemma 3.2.1) for proof.

We now proceed to show threshold optimality.
Recall that when the action is idle, the state has a fixed transition from $(q, \tau)$ to $(q, \tau+1)$ since there is no observation involved. This allows us to reinterpret the action as the number of time-steps to wait before probing again. Denote by $L_{(q,\tau)}(l)$ the total reward obtained starting from state $(q, \tau)$, if we idle for the first $l$ time-steps, probe at $(l+1)$-th time-step, and continue optimally. We have $\forall q,\forall\tau$, and $\forall l\in\{0,1,...\}$,
\begin{equation*}
	L_{(q,\tau)}(l) := \gamma \left(\frac{1-\beta^l}{1-\beta}\right) + \beta^l \sum_{q'=0}^K P_{qq'}(\tau+l) \wtilde{V}_\gamma(q')
\end{equation*}

The optimal value function $V(q,\tau)$ and the optimal action $\pi^*(q,\tau)$ can be recovered from $L_{(q,\tau)}(l)$ as
\begin{align*}
	V(q,\tau) &= \max_{l\geq 0} L_{(q,\tau)}(l) \\
	\pi^*(q,\tau) &= \Ind\left[\arg\max_{l\geq 0} L_{(q,\tau)}(l) =0\right]
\end{align*}
where, we break ties by selecting the smallest index.  We now show an important structural property of $L_{(q,\tau)}(l)$.
\begin{proposition}\label{prop:L_structure}
	If $L_{(q,\tau)}(l) > L_{(q,\tau)}(l+1)$ for some $l\in\{0,1...\}$, then $\forall k \geq l$, we have $L_{(q,\tau)}(k) > L_{(q,\tau)}(k+1)$. Further, if $L_{(q,\tau)}(l) < L_{(q,\tau)}(l+1)$ for some $q\in\{1...,K\}$, then $\forall q' \geq q$, we have $L_{(q',\tau)}(l) < L_{(q',\tau)}(l+1)$
\end{proposition}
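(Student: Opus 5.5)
The plan is to write the one-step difference $D_{(q,\tau)}(l) := L_{(q,\tau)}(l) - L_{(q,\tau)}(l+1)$ in a form $\beta^l G(q,\tau+l)$, with $G$ as in Lemma \ref{lemma:rosenfield} for a suitable sequence $x_{q'}$. Both claims then follow from the two parts of that lemma. From the definition,
\begin{align*}
D_{(q,\tau)}(l) = -\gamma\beta^l + \beta^l \sum_{q'} P_{qq'}(\tau+l)\wtilde{V}_\gamma(q') - \beta^{l+1}\sum_{q'} P_{qq'}(\tau+l+1)\wtilde{V}_\gamma(q').
\end{align*}
Using the Chapman--Kolmogorov identity $P_{qq'}(\tau+l+1)=\sum_{j}P_{qj}(\tau+l)P_{jq'}(1)$, this becomes $\beta^l \sum_j P_{qj}(\tau+l)\, x_j$ with
\[
x_j := \wtilde{V}_\gamma(j) - \gamma - \beta\sum_{q'} P_{jq'}(1)\wtilde{V}_\gamma(q').
\]
Here $x_j$ is the advantage of probing now over idling one step and then probing, measured from a state whose current backlog is exactly $j$. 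Hence $D_{(q,\tau)}(l)=\beta^l G(q,\tau+l)$, and the sign of $D$ is the sign of $G(q,\tau+l)$.

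Given a single crossing from above, both claims are immediate. First, if $D_{(q,\tau)}(l)>0$ then $G(q,\tau+l)>0$. Part 1 of Lemma \ref{lemma:rosenfield} gives $G(q,\tau+k)>0$ for all $k\ge l$, so $D_{(q,\tau)}(k)>0$. Second, if $D_{(q,\tau)}(l)<0$ then $G(q,\tau+l)<0$. Part 2 gives $G(q',\tau+l)<0$ for all $q'\ge q$, so $L_{(q',\tau)}(l)<L_{(q',\tau)}(l+1)$. Strictness at zero may need care: if some $x_j=0$, I would either perturb $\gamma$ or note that the lemma's proof via total positivity tolerates weak single crossing.

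The main obstacle is proving that $x_j$ crosses zero only once, from above. For $j=0$, $x_0 = K+\beta V_\gamma(K,1) - \gamma - \beta\sum_{q'}P_{0q'}(1)\wtilde{V}_\gamma(q')$. Since $\bm{P}$ is lower triangular, $P_{00}(1)=1$, so $x_0=(1-\beta)(K+\beta V_\gamma(K,1))-\gamma$. This is positive for $\gamma$ in the nontrivial range below $K$, because $V_\gamma(K,1)\ge \gamma/(1-\beta)$. For $j\ge1$, $x_j=\beta V_\gamma(j,1)-\gamma-\beta\sum_{q'}P_{jq'}(1)\wtilde{V}_\gamma(q')$. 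I would show this is negative for all $j\ge 1$, which gives a crossing right after $j'=0$. Probing yields nothing immediate when the backlog is $j\ge1$, so $\beta V_\gamma(j,1)$ should be dominated by idling and probing later. Concretely, $V_\gamma(j,1)\ge \gamma+\beta V_\gamma(j,2)$ only gives the wrong direction. Instead I would bound $V_\gamma(j,1)$ above via the Bellman equation and use Lemma \ref{lemma:levin}: $V_\gamma(j,2)\le \sum_{q'}P_{jq'}(1)V_\gamma(q',1)$. The idea is to compare against the policy that knows the intermediate backlog. I expect this inequality, showing $x_j\le 0$ for $j\ge1$ with strictness from the positive subsidy $\gamma$, to be the delicate step. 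If it fails in general, I would fall back to showing $x_j$ is non-increasing in $j$. That would use monotonicity of $V_\gamma(j,1)$ in $j$, since a larger backlog is stochastically worse under lower triangular, totally positive $\bm{P}$.
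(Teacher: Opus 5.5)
Your reduction matches the paper's proof: the same Chapman--Kolmogorov rewrite $L_{(q,\tau)}(l)-L_{(q,\tau)}(l+1)=\beta^l\sum_{j}P_{qj}(\tau+l)\,x_j$, the same $x_j$, the same bound $x_0\ge(1-\beta)(K-\gamma)$, and the same use of both parts of Lemma~\ref{lemma:rosenfield}. However, the step you flag as delicate, $x_j<0$ for $j\ge1$, is the actual content of the proposition. You leave it as an expectation with a fallback, so as written the proof has a gap exactly there.

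The missing idea is to split on which branch attains the maximum in the Bellman equation at $(j,1)$, rather than look for a general upper bound on $V_\gamma(j,1)$. You dismissed $V_\gamma(j,1)\ge\gamma+\beta V_\gamma(j,2)$ as going the wrong way, but it holds with \emph{equality} when idling is optimal at $(j,1)$. In that case $\wtilde{V}_\gamma(j)=\beta\gamma+\beta^2V_\gamma(j,2)$, so
\begin{equation*}
x_j=-(1-\beta)\gamma+\beta\Big[\beta V_\gamma(j,2)-\sum_{q'}P_{jq'}(1)\wtilde{V}_\gamma(q')\Big]\le-(1-\beta)\gamma .
\end{equation*}
The inequality follows from Lemma~\ref{lemma:levin}, which gives $\beta V_\gamma(j,2)\le\sum_{q'}P_{jq'}(1)\,\beta V_\gamma(q',1)$, followed by a termwise comparison with $\wtilde{V}_\gamma(q')$. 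For $q'\ge1$ the terms coincide. For $q'=0$ you need $\beta V_\gamma(0,1)\le\wtilde{V}_\gamma(0)$. This holds because row $0$ of $\bm{P}^\tau$ is $\bm{\delta}(0)$, which gives $V_\gamma(0,\cdot)=\max\{\gamma/(1-\beta),\wtilde{V}_\gamma(0)\}$, while $\wtilde{V}_\gamma(0)\ge K+\beta\gamma/(1-\beta)\ge 0$. When probing is optimal at $(j,1)$, $V_\gamma(j,1)=\sum_{q'}P_{jq'}(1)\wtilde{V}_\gamma(q')$, and $x_j=-\gamma$ exactly.

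Hence $x_j\le-(1-\beta)\gamma<0$ for all $j\ge1$, and $x_0>0$, whenever $\gamma\in(0,K)$. This is strict single crossing at $j'=0$. You therefore need neither a perturbation of $\gamma$ nor the fallback via monotonicity of $x_j$ in $j$, which would be harder to establish anyway.
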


We prove this in Appendix \ref{proof:L_structure}.
This proposition states that once $L_{(q,\tau)}(l)$ starts decreasing, it will continue to decrease. In other words, if we have already waited long enough (say $l_0$ time-steps) such that $L_{(q,\tau)}(l_0)$ is decreasing, then we are better off probing immediately, since waiting more will only reduce the value. This in turn implies that there is an optimal amount of time-steps to wait after which probing is optimal.
We now show this formally in the following theorem.
\begin{theorem}\label{thm:model2_threshold}
	There is an optimal threshold value $\tau_q^* \geq 1$ for each observation $q\in\{0,...,K\}$ such that $\forall \tau\in\{1,2,...\}$, the optimal action is given by
	\begin{equation*}
		\pi^*(q,\tau) = \begin{cases}
			1 & \text{ for } \tau \geq \tau_q^*,\\
			0 & \text{ for } \tau < \tau_q^*.
		\end{cases}
	\end{equation*}
	Further, the optimal thresholds are such that $\tau_1^*\leq\tau_2^*...\leq\tau_K^*$.
\end{theorem}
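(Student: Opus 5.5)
The proof rests on Proposition~\ref{prop:L_structure} together with the identity $L_{(q,\tau)}(l) = \gamma\frac{1-\beta^{m}}{1-\beta} + \beta^{m} L_{(q,\tau+m)}(l-m)$ for $0\le m\le l$. This identity holds because the probe term depends on $\tau$ and $l$ only through $\tau+l$. So waiting $m$ steps from $(q,\tau)$ and then probing at step $l$ is the same as starting from $(q,\tau+m)$ and waiting $l-m$ more steps.

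\textbf{Threshold in $\tau$.} Fix $q$. We have $\pi^*(q,\tau)=1$ exactly when $L_{(q,\tau)}(0)\ge L_{(q,\tau)}(l)$ for all $l\ge 1$, with ties going to the smallest index.

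We claim that $\pi^*(q,\tau)=1$ is equivalent to $L_{(q,\tau)}(0)\ge L_{(q,\tau)}(1)$, i.e.\ the one-step comparison suffices.
\begin{itemize}
\item In the strict case $L_{(q,\tau)}(0)>L_{(q,\tau)}(1)$, Proposition~\ref{prop:L_structure} makes the whole sequence strictly decreasing, so $l=0$ is the argmax.
\item The equality case is handled the same way, applying the strict version of Proposition~\ref{prop:L_structure} at the first strict step. One could also perturb $\gamma$, or note that ties are broken toward probing.
\end{itemize}

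Next, define $D(q,\tau):=L_{(q,\tau)}(0)-L_{(q,\tau)}(1)$. By the shift identity, $D(q,\tau+1)$ equals $\beta^{-1}\big(L_{(q,\tau)}(1)-L_{(q,\tau)}(2)\big)$ up to its sign. Proposition~\ref{prop:L_structure} (first part, applied at $l=0$) therefore gives $D(q,\tau)>0 \Rightarrow D(q,\tau+1)>0$. So the set $\{\tau: \pi^*(q,\tau)=1\}$ is an up-set in $\tau$.

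Set $\tau_q^*$ to be its minimum. Then $\tau_q^*\ge 1$ automatically, and we must also check that $\tau_q^*$ is finite. Because $\bm P$ is lower triangular with Poisson departures, $P_{qq'}(\tau)\to\Ind[q'=0]$. Hence the probe value tends to $K+\beta V_\gamma(K,1)$, whereas idling forever yields $\gamma/(1-\beta)$. If the probe value ever beats idling, the limit argument gives finiteness; otherwise set $\tau_q^*=\infty$, which I would note as a degenerate case for large $\gamma$.

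\textbf{Ordering of thresholds.} For $1\le q\le q'$, I will use the second part of Proposition~\ref{prop:L_structure}. It says $L_{(q,\tau)}(0)<L_{(q,\tau)}(1)$ implies $L_{(q',\tau)}(0)<L_{(q',\tau)}(1)$. In the language of the equivalence above, idling being optimal at $(q,\tau)$ forces idling at $(q',\tau)$.

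Take $\tau<\tau_q^*$, so $\pi^*(q,\tau)=0$ and, by the equivalence, $L_{(q,\tau)}(0)<L_{(q,\tau)}(1)$ (the tie case is again treated via the smallest-index convention). Then $\pi^*(q',\tau)=0$, so $\tau<\tau_{q'}^*$. This yields $\tau_q^*\le\tau_{q'}^*$.

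\textbf{Main obstacle.} The delicate step is the equivalence "$\pi^*=1 \iff L(0)\ge L(1)$" in the presence of ties, since Proposition~\ref{prop:L_structure} is stated only with strict inequalities. I would handle it by contradiction. Suppose $L(0)\ge L(1)$ but some $l^\star\ge 2$ gives $L(l^\star)>L(0)$. Then some step $k<l^\star$ satisfies $L(k)<L(k+1)$, while $L(0)\ge L(1)$. Choose an index $j<k$ with $L(j)>L(j+1)$ strictly, if one exists; the strict decrease contradicts Proposition~\ref{prop:L_structure}. If all steps before $k$ are ties, shift the state using the identity and reapply the proposition at the first non-tie step. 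If that still fails, note that the set of $\gamma$ producing ties is finite or measure-zero per state, and pass to the limit, since $V_\gamma$ is continuous in $\gamma$.
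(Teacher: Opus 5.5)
\textbf{Gap: ties.} Your overall structure matches the paper's: the shift identity, an up-set in $\tau$, then transfer across $q$ via the second part of Proposition~\ref{prop:L_structure}. But both halves rest on the equivalence $\pi^*(q,\tau)=1 \iff L_{(q,\tau)}(0)\ge L_{(q,\tau)}(1)$, and that equivalence is not established when ties occur.

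Both parts of Proposition~\ref{prop:L_structure} only propagate \emph{strict} inequalities. So nothing you have rules out the pattern $L_{(q,\tau)}(0)=L_{(q,\tau)}(1)<L_{(q,\tau)}(2)$. In that pattern $\pi^*(q,\tau)=0$ while $D(q,\tau)=0$, which has two consequences:
\begin{enumerate}
\item The ordering step fails. It uses $\pi^*(q,\tau)=0\Rightarrow L_{(q,\tau)}(0)<L_{(q,\tau)}(1)$ at every $\tau<\tau_q^*$, and that implication is exactly what breaks here.
\item The up-set step is incomplete. It only proves $D(q,\tau)>0\Rightarrow D(q,\tau+1)>0$, which says nothing when $D(q,\tau)=0$.
\end{enumerate}

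Your proposed remedies do not close this:
\begin{enumerate}
\item ``Reapplying at the first non-tie step'' gives nothing when that step is an increase. The contrapositive of the first part permits ties followed by an increase.
\item The smallest-index convention cannot turn $L(0)=L(1)$ into a strict inequality.
\item Perturbing $\gamma$ relies on an unproved claim that ties occur only for a null set of $\gamma$. Also, a limit turns the strict inequalities you need into weak ones, so it cannot recover statements about the tie-broken $\pi^*$ at the given $\gamma$.
\end{enumerate}

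\textbf{The fix, which is what the paper does, is to avoid the equivalence altogether.}

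For the up-set, argue directly from $\pi^*(q,\tau)=1$. Either $L_{(q,\tau)}(0)>L_{(q,\tau)}(1)$, and then the first part gives $L_{(q,\tau)}(1)>L_{(q,\tau)}(2)>\cdots$. Or $L_{(q,\tau)}(0)=L_{(q,\tau)}(1)$, and then $L_{(q,\tau)}(1)=L_{(q,\tau)}(0)\ge L_{(q,\tau)}(l)$ for all $l$. In both cases $\arg\max_{l\ge 1}L_{(q,\tau)}(l)=1$, which by your shift identity means $\pi^*(q,\tau+1)=1$. The paper packages this as $l^*(q,\tau+1)=[l^*(q,\tau)-1]^+$.

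For the ordering, invoke the strict inequality only at the decision boundary, where the tie-breaking rule guarantees it. If $\tau_q^*\ge 2$, then at $\tau=\tau_q^*-1$ the maximizer of $L_{(q,\tau)}(\cdot)$ is $l=1$. The smallest-index rule then forces $L_{(q,\tau)}(0)<L_{(q,\tau)}(1)$ strictly. The second part of Proposition~\ref{prop:L_structure} transfers this to every $q'\ge q$, giving $\pi^*(q',\tau_q^*-1)=0$. The up-set property then yields $\tau_{q'}^*\ge\tau_q^*$. The paper makes the same move from $\tau=1$ at $l=l^*(q,1)-1$, where $L_{(q,1)}(l^*-1)<L_{(q,1)}(l^*)$ holds precisely because $l^*$ is the smallest maximizer.
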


\ifisreport
We prove this in Appendix \ref{proof:model_2_threshold}.
\else
\begin{IEEEproof}
Define $l^*(q,\tau) := \arg \max_{l\geq0} L_{(q,\tau)}(l)$. Using Proposition \ref{prop:L_structure}, we can shown that $l^*(q,\tau+1) = [l^*(q,\tau)-1]^+$ where, $[\cdot]^+ := \max\{\cdot, 0 \}$. We can then show that the optimal policy is a threshold policy with thresholds $\tau^*_q = l^*(q,1) + 1$. Indeed, using $l^*(q,\tau+1) = [l^*(q,\tau)-1]^+$, we have $\pi^*(q,\tau) = \Ind[l^*(q,\tau) = 0] = \Ind[(l^*(q,1)- \tau +1)^+ = 0] = \Ind[(\tau^*_q-\tau)^+=0].$ This shows that the optimal policy is threshold. Similarly, using Proposition \ref{prop:L_structure}'s second part, we can show $\tau_{q}^* \leq \tau_{q'}^*$ for all $q \leq q'$. A detailed proof can be found in \cite{tech_report}.
\end{IEEEproof}
\fi
Theorem \ref{thm:model2_threshold} states that once we observe the queue backlog to be at $q$, we must idle for $\tau_q^*$ time-steps before probing again. Further, we must wait longer if the observed backlog is higher
$\tau_{q}^* \leq \tau_{q'}^*$ if $q \leq q'$. This is as expected because more pending jobs in the queue implies longer time to empty, hence we must wait longer before we send our next probe. Finally, even though we have shown threshold optimality, we are unable to derive the threshold values in closed-form due to the two-dimensional nature of the state.

\subsection{Indexability for special cases}
Since the optimal policy has a threshold structure, we can show indexability by verifying that the thresholds $\tau_q^*$'s are non-decreasing with $\gamma$. However, since we were unable to derive the threshold values in closed-form, proving indexability for this model is more difficult than the standard queue model.

 In \cite{spectrum_sharing_main}, a sufficient condition for indexability was derived in terms of total discounted passive time. However, this condition was derived for the case of scalar beliefs. We now extend this result to our two-dimensional $(q,\tau)$ state-space.
 We define the total discounted active time of a policy $\pi$ given initial state $S(0)=(q,\tau)$ as
\begin{equation*}
	A_\pi(q,\tau) := \Ex_\pi \left[ \sum_{t=0}^\infty \beta^t \pi(t) \mid S(0)=(q,\tau) \right]
\end{equation*}

In the following proposition, we derive a sufficient condition for indexability in terms of the active time $A_\pi(q,\tau)$. Let $\pi_0$ be a policy that probes at $t=0$ and continues optimally. Similarly, let $\pi_1$ be a policy that idles at $t=0$ and continues optimally. In the proposition, we highlight the dependency of threshold values $\tau_q^*$ on subsidy $\gamma$ by explicitly writing $\tau_q^*(\gamma)$.
\begin{proposition}\label{prop:active_condition}
	The optimal threshold values $\tau_q^*(\gamma)$ are non-decreasing in $\gamma$ if $\forall q\in\{0,...,K\}$, $A_{\pi_1}(q,\tau_q^*) \geq A_{\pi_0}(q,\tau_q^*)$.
\end{proposition}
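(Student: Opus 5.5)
The plan is to adapt the passive-time argument of \cite{spectrum_sharing_main} to the two-dimensional state $(q,\tau)$. The link to the thresholds comes from Theorem \ref{thm:model2_threshold}: for each $q$, the set of states where idling is optimal is exactly $\{(q,\tau): \tau<\tau_q^*(\gamma)\}$. So $\tau_q^*(\gamma)$ is non-decreasing in $\gamma$ if and only if every state $(q,\tau)$ that is optimally idled at subsidy $\gamma$ is still optimally idled at every $\gamma'>\gamma$. I will therefore study, for fixed $(q,\tau)$, the advantage of idling over probing,
\begin{equation*}
D_{q,\tau}(\gamma) := \gamma + \beta V_\gamma(q,\tau+1) - \sum_{q'=0}^K P_{qq'}(\tau)\wtilde{V}_\gamma(q'),
\end{equation*}
and show that it crosses zero at most once, and only from below.

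The first step is linearity in $\gamma$. For any fixed stationary policy $\pi$, the discounted reward equals $R_\pi(q,\tau)+\gamma\bigl(\tfrac{1}{1-\beta}-A_\pi(q,\tau)\bigr)$, where $R_\pi$ is the discounted number of dispatched jobs and does not depend on $\gamma$. Since $V_\gamma$ is a supremum of such affine functions, it is convex and piecewise linear in $\gamma$. Its right derivative equals the discounted passive time $\tfrac{1}{1-\beta}-A_{\pi}$ of an optimal policy, chosen consistently with the tie-breaking rule (smallest $l$). Both terms of $D_{q,\tau}$ are values of policies:
\begin{itemize}
\item the first term is the value of the policy that idles at $t=0$ and then continues optimally;
\item the second term is the value of the policy that probes at $t=0$ and then continues optimally.
\end{itemize}
So $D_{q,\tau}$ is continuous and piecewise linear, and its right derivative is the passive time of the idle-first policy minus the passive time of the probe-first policy. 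Equivalently, it is the active time of the probe-first policy minus that of the idle-first policy. I will match this carefully to the labels $\pi_0,\pi_1$ in the hypothesis. The intended reading is that the hypothesis makes this slope nonnegative at the boundary states $(q,\tau_q^*(\gamma))$.

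The second step localises zeros of $D_{q,\tau}$. Fix $q$. By Theorem \ref{thm:model2_threshold}, $D_{q,\tau}(\gamma)\ge 0$ for $\tau<\tau_q^*(\gamma)$ and $D_{q,\tau}(\gamma)\le 0$ for $\tau\ge\tau_q^*(\gamma)$. Hence the threshold can change at $\gamma_0$ only through a state $(q,\tau)$ with $\tau$ equal to $\tau_q^*$ (or $\tau_q^*-1$) at $\gamma_0$ for which $D_{q,\tau}(\gamma_0)=0$. At such a tie, both the idle-first and the probe-first policies are optimal, so the hypothesis applies there and gives $D'_{q,\tau}(\gamma_0^+)\ge 0$.

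The third step is the elementary lemma: a continuous piecewise-linear function whose right derivative is nonnegative at every zero cannot go from $\le 0$ to $>0$ and back to $<0$. Hence once idling is optimal at $(q,\tau)$ it stays optimal as $\gamma$ grows, so $\tau_q^*(\gamma)$ cannot decrease. The trivial ranges $\gamma<0$ (always probe) and $\gamma>K$ (always idle) bound the argument.

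I expect the main obstacle to be nonuniqueness at ties. When $D_{q,\tau}(\gamma_0)=0$, several optimal continuations may exist, and the right derivative of $V_\gamma$ is the passive time of the policy that is optimal on $(\gamma_0,\gamma_0+\epsilon)$, not of an arbitrary optimal one. I would handle this by using right derivatives throughout and by choosing $\pi_0,\pi_1$ as the continuations optimal just to the right of $\gamma_0$ under the smallest-index tie-breaking rule. I would also check that a flat slope ($=0$) at a zero cannot be followed by a negative slope. If it could, I would strengthen the argument by perturbing $\gamma$ slightly, or instead apply the hypothesis on every linear piece adjacent to the zero.
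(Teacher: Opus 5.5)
Your proposal is essentially the paper's argument. Both rest on the envelope fact that the $\gamma$-derivative of a policy's value is its discounted passive time, so the slope of the idle-minus-probe advantage at the boundary state $(q,\tau_q^*)$ equals the active time of the probe-first policy minus that of the idle-first one, and the hypothesis then forbids a boundary state from switching from idle to probe as $\gamma$ increases; you are right that the paper's proof (like its $\beta\le 0.5$ theorem) treats $\pi_1$ as probe-first despite the swapped wording where $\pi_0,\pi_1$ are introduced. The paper instead argues by contradiction between subsidies $\gamma_0-\epsilon$ and $\gamma_0$, a left-sided comparison that fits its probe-at-ties (smallest-$l$) threshold directly, whereas your right-derivative single-crossing version is a somewhat more careful rendering of the same idea that most naturally gives monotonicity of the set where idling is weakly optimal.
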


See proof in Appendix \ref{proof:active_condition}.
Notice that this condition is similar to the total discounted passive time condition in \cite{spectrum_sharing_main}. We have extended this condition to our two-dimensional and discrete state $(q,\tau)$.
A similar condition was also derived in \cite{xiaojun_age} which needs verifying for a sub-class of policies but for all states. Our condition is different in that it needs verifying specifically for our threshold optimal policy, but only at the decision boundaries $(q,\tau^*_q)$.
Next, we use Proposition \ref{prop:active_condition}'s condition to check indexability. However, since we do not have a closed-form expressions for this model, checking this condition is also difficult. Moreover, this model is not resetting i.e. the active action does not reset the state to a fixed state.
Therefore, previously developed techniques for verifying indexability of resetting processes cannot be used.
Consequently, verifying general indexability is difficult. Therefore, we show indexability for two special cases: $K\leq 2$ and $\beta \leq 0.5$. And for general values of $K$ and $\beta$, we show some numerical trends that support indexability in Section \ref{sec:sims}. Notice that our first special case $K\leq 2$ is the first step towards a general indexability result since $K=2$ is the smallest value of $K$ such that the belief is a vector. 
\begin{theorem} \label{thm:indexability_smallK}
	The job dispatching problem under the blocking queues model (Model 2) is indexable for $K \leq 2$.
\end{theorem}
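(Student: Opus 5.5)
We will prove the result by verifying the sufficient condition of Proposition \ref{prop:active_condition}. Given threshold optimality (Theorem \ref{thm:model2_threshold}), it suffices to show that for every $q\in\{0,\dots,K\}$ and every $\gamma\in[0,K]$,
\begin{equation*}
A_{\pi_1}(q,\tau_q^*) \geq A_{\pi_0}(q,\tau_q^*).
\end{equation*}
The cases $\gamma<0$ and $\gamma>K$ are trivial (always probe and always idle, respectively).

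The key simplification for $K\leq 2$ is that the observed state $q$ takes at most three values. Moreover, after probing, the next observation is either $K$ (when the queue was empty) or some $q'\in\{1,\dots,K\}$. Under the threshold policy, the active time from $(q,\tau)$ therefore decomposes into a renewal-type system over the few states $(q',1)$. Write $a_{q'} := A_{\pi^*}(q',1)$. Then
\begin{equation*}
A_{\pi_0}(q,\tau) = 1+\beta\sum_{q'}P_{qq'}(\tau)\,a_{\hat q'},
\end{equation*}
where $\hat q' = K$ if $q'=0$ and $\hat q' = q'$ otherwise. Similarly,
\begin{equation*}
A_{\pi_1}(q,\tau_q^*) = \beta\, A_{\pi^*}(q,\tau_q^*+1) = \beta\big(1+\beta\textstyle\sum_{q'}P_{qq'}(\tau_q^*+1)a_{\hat q'}\big),
\end{equation*}
using that $\pi^*$ probes at $(q,\tau_q^*+1)$.

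The case $K=1$ is essentially resetting: every probe leads to state $(1,1)$, since either the queue is empty and refilled to $1$, or it is observed at $1$. The argument then reduces to the scalar setting, and the Model 1 style computation with $W$ monotone applies directly.

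For $K=2$, I would first solve the linear system for $a_1,a_2$ in terms of the thresholds $\tau_1^*\le \tau_2^*$ and the Poisson transition probabilities $P_{2q'}(\tau)$ and $P_{1q'}(\tau)$. Lower-triangularity of $\bm{P}$ makes these explicit: $P_{11}(\tau)=e^{-\mu\tau}$, $P_{22}(\tau)=e^{-\mu\tau}$, and $P_{21}(\tau)=\mu\tau e^{-\mu\tau}$. The required inequality becomes
\begin{equation*}
\beta+\beta^2\textstyle\sum P_{qq'}(\tau+1)a_{\hat q'} \ge 1+\beta\textstyle\sum P_{qq'}(\tau)a_{\hat q'}.
\end{equation*}
Since every $a\le 1/(1-\beta)$, I expect the bulk of the argument to rely on bounding differences of the form $a_{q'}-a_{q''}$. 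In particular, the key intermediate claim is that $a_1\ge a_2$: starting from a lower observation, one probes sooner, by $\tau_1^*\le\tau_2^*$. Using this monotonicity together with the stochastic ordering from Lemma \ref{lemma:rosenfield}, namely that $P_{q\cdot}(\tau+1)$ puts more mass on empty (hence on $a_K$) than $P_{q\cdot}(\tau)$, I would reduce the inequality to a scalar inequality in $\beta$, $\mu$ and the thresholds.

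The main obstacle is this final inequality. The left side carries a factor $\beta$ versus $1$ on the right, so the comparison must exploit the fact that, because $\tau_q^*\ge1$ and the subsidy makes idling optimal just before the threshold, the active time under $\pi_0$ cannot exceed that under $\pi_1$. That is, the gap between $1$ and $\beta$ is exactly compensated by future active time.

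If a direct algebraic check fails, I would instead argue via the standard coupling. Policy $\pi_0$ probes now and $\pi_1$ probes one step later. After their respective probes, both follow threshold rules, and with only three observation states their probe sequences interleave in such a way that $\pi_1$ accumulates at least as many discounted probes. I would verify this interleaving by a case split on whether the first probe finds the queue empty, at $1$, or at $2$.
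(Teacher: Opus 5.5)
Your $K=1$ step is fine, but the $K=2$ part of your plan aims at an inequality that is false.

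\textbf{The direction of the condition is reversed.} You read the labels in Proposition~\ref{prop:active_condition} literally ($\pi_0$ probes first, $\pi_1$ idles first). So you set out to show that the idle-first policy accumulates at least as much discounted active time as the probe-first one: $\beta+\beta^2\sum_{q'}P_{qq'}(\tau_q^*+1)a_{\hat q'}\ge 1+\beta\sum_{q'}P_{qq'}(\tau_q^*)a_{\hat q'}$. The labels in that statement are swapped relative to how the proposition is proved and used. In its proof, $V^{\pi_1}_{\gamma_0}\ge V^{\pi_0}_{\gamma_0}$ at the threshold, where probing is optimal. In the $\beta\le 0.5$ theorem, $\pi_1$ is explicitly the policy that probes at $t=0$.

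Independently of labels, the derivative argument needs (probe value) $-$ (idle value) to be non-increasing in $\gamma$ at the boundary. That means passive time under probe-first $\le$ passive time under idle-first, i.e.\ \emph{active time under probe-first $\ge$ active time under idle-first}. Your inequality is the reverse, and it fails:
\begin{itemize}
\item For $K=1$ with threshold $\tau^*$, probe-first gives $1/(1-\beta^{\tau^*})$ and idle-first gives $\beta/(1-\beta^{\tau^*})$.
\item For $K=2$ and $\beta<1/2$, probe-first is at least $1$ while idle-first is at most $\beta/(1-\beta)<1$.
\item For $K=2$ and any parameters, at $q=2$ the probe-first advantage is at least $(1-\beta)/(1-\beta^{\tau_2^*})>0$ (shown below).
\end{itemize}
So neither your algebraic reduction nor your coupling fallback (``$\pi_1$, probing one step later, accumulates at least as many discounted probes'') can succeed. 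The heuristic that the $1$ versus $\beta$ gap is compensated by future active time is backwards.

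\textbf{What remains once the direction is fixed.} Your setup is then essentially the paper's route: renewal equations for the post-probe active times $a_1,a_2$, the ordering $a_1\ge a_2$, and the explicit Poisson probabilities. You need
$D_q:=(1-\beta)+\beta\bigl[(1-\beta)a_2+\bigl(P_{q1}(\tau_q^*)-\beta P_{q1}(\tau_q^*+1)\bigr)(a_1-a_2)\bigr]\ge 0$.

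For $q=1$ this is immediate, since $P_{11}(\tau)-\beta P_{11}(\tau+1)=e^{-\mu\tau}(1-\beta e^{-\mu})\ge 0$.

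For $q=2$, however, $P_{21}(\tau)-\beta P_{21}(\tau+1)=P_{21}(\tau)\bigl(1-(1+1/\tau)\beta e^{-\mu}\bigr)$ can be negative. Mass moves both from $2$ to $1$ and from $1$ to $0$, so ``more mass on empty'' plus $a_1\ge a_2$ does not fix the sign. The missing step is the one the paper uses:
\begin{itemize}
\item Set $\tau=\tau_2^*$ and take $a_1-a_2$ from the $q=2$ renewal equation: $\beta^{\tau}P_{21}(\tau)(a_1-a_2)=y$, where $y:=(1-\beta^{\tau})a_2-\beta^{\tau-1}$. This cancels the factor $P_{21}(\tau)$.
\item Substituting gives $D_2=\frac{1-\beta}{1-\beta^{\tau}}+\beta y\bigl[\frac{1-\beta}{1-\beta^{\tau}}+\beta^{-\tau}\bigl(1-(1+1/\tau)\beta e^{-\mu}\bigr)\bigr]$.
\item $y\ge 0$ because $\tau_2^*=\tau_K^*$ is the longest inter-probe gap under the optimal policy.
\item The bracket is non-negative by $e^{-\mu}\le 1$ and the Bernoulli-type bound $\beta(1-\beta^{\tau})\le\tau(1-\beta)$.
\end{itemize}
This yields $D_2\ge(1-\beta)/(1-\beta^{\tau_2^*})$, which completes the verification of the correctly oriented condition.
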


See proof in Appendix \ref{proof:indexability_smallK}. The second special case $\beta\leq 0.5$ corresponds to settings where instantaneous rewards carry more weight than future rewards. 

\begin{theorem} \label{thm:indexability_smallbeta}
	The job dispatching problem under the blocking queues model (Model 2) is indexable for $\beta \leq 0.5$.
\end{theorem}
\begin{IEEEproof}
	Since policy $\pi_1$'s action at $t=0$ is probe, we have $A_{\pi_1}(q,\tau_q^*) = 1+ \Ex_{\pi_1} \left[ \sum_{t=1}^\infty \beta^t \pi_1(t) \mid S(0)=(q,\tau) \right] \geq 1.$ Similarly, since policy $\pi_0$ idles at $t=0$, we have $A_{\pi_0}(q,\tau_q^*) = \Ex_{\pi_0} \left[ \sum_{t=1}^\infty \beta^t \pi_0(t) \mid S(0)=(q,\tau) \right] \leq \beta/(1-\beta).$ Therefore, $A_{\pi_1}(q,\tau_q^*) - A_{\pi_0}(q,\tau_q^*) \geq (1-2\beta)/(1-\beta) \geq 0$ using $\beta\leq0.5$. Hence, we have indexability due to Proposition \ref{prop:active_condition}.
\end{IEEEproof}

In Theorems \ref{thm:indexability_smallK} and \ref{thm:indexability_smallbeta}, we showed indexability for two special cases. In Section \ref{sec:model2_sims}, we conduct numerical simulations for a general case and show evidence of general indexability.

\section{Simulations}  \label{sec:sims}
\subsection{Model 1: Finite Queues Model}\label{sec:model1_sims}
We simulate the finite queues model with discount factor $\beta=0.99$, $N=10$ parallel servers each with capacity $K=6$.
\ifisreport
Fig. \ref{fig:model1_avg_reward_} shows the total reward  $\mathcal{R}_\pi(T):=\sum_{t=0}^T \beta^t \Ex [R_\pi(t)]$ of the three policies as a function of $T$.
Fig. \ref{fig:model1_gap} shows the performance gap of the policies relative to oracle $\left({\mathcal{R}_{JSQ}(T) - \mathcal{R}_\pi(T)}\right) / {\mathcal{R}_{JSQ}(T)}$.
\begin{figure}[hbtp]
	\centering
	\begin{subfigure}[h]{0.8\linewidth}
		\centering
		\includegraphics[width=\linewidth]{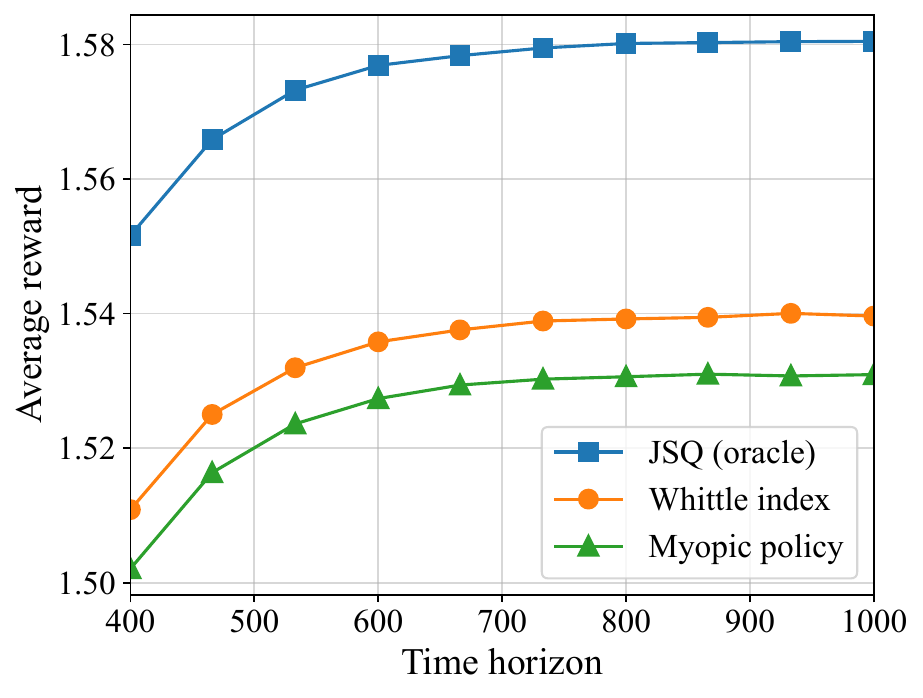}\vspace*{-0.2cm}
		\caption{Average reward.}\label{fig:model1_avg_reward_}\vspace*{0.25cm}
	\end{subfigure}

	\begin{subfigure}[h]{0.8\linewidth}
		\centering
		\includegraphics[width=\linewidth]{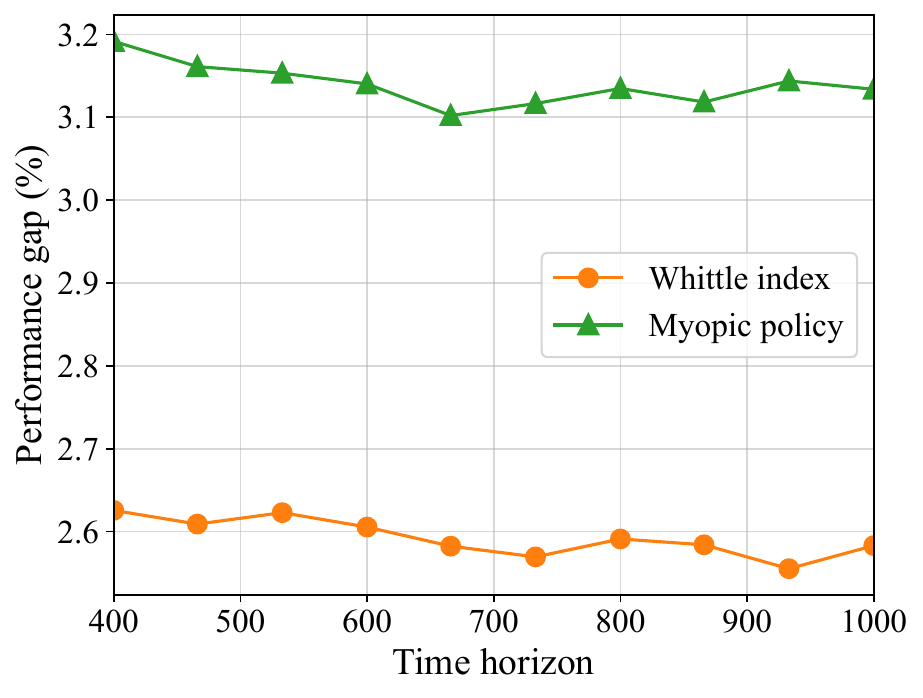}\vspace*{-0.2cm}
		\caption{Performance gap.}\label{fig:model1_gap}
	\end{subfigure}
	\caption{Reward performance for the finite queues model.}\vspace{-0.1cm}
	\label{fig:model1_reward}
\end{figure}
\fi
We pick the following variable server rates, $\mu_i=0.05$ for $i=1,...4$, $\mu_i=0.1$ for $i=5,...,8$, and $\mu_i=0.5$ for $i=9,10$. We simulate the Whittle index policy, myopic policy, and the oracle JSQ policy. The myopic policy greedily picks the queue with least expected backlog at every time-step. The oracle JSQ policy has knowledge of the backlogs and picks the shortest queue.
\ifisreport
\else
Fig. \ref{fig:model1_avg_reward} shows the total reward  $\mathcal{R}_\pi(T):=\sum_{t=0}^T \beta^t \Ex [R_\pi(t)]$ of the three policies as a function of $T$.
\begin{figure}[hbtp]
	\centering
	\includegraphics[width=0.9\linewidth]{figs/model1/reward-plot-mu_mid.pdf}
	\caption{Reward performance for the finite queues model.}\label{fig:model1_avg_reward}
\end{figure}
Further, we found that the performance gap relative to oracle $\left({\mathcal{R}_{JSQ}(T) - \mathcal{R}_\pi(T)}\right) / {\mathcal{R}_{JSQ}(T)}$ of Whittle index policy was around 2.9\% and the myopic policy was 3.2\%.
\fi
As can be seen, the Whittle index policy has higher reward than the myopic policy, and its performance is comparable to the oracle JSQ.
\subsection{Model 2: Blocking Queues Model}\label{sec:model2_sims}
We first conduct numerical analysis of the single-arm decoupled problem for this model. For a given subsidy $\gamma\in[0,K]$, we obtain the optimal thresholds $\tau_q^*(\gamma)$'s numerically from value iteration on Bellman equations in \eqref{eq:value_func_model2}. We repeat this calculation for various values of $\gamma\in[0,K]$ and plot the optimal thresholds $\tau_q^*(\gamma)$ as a function of $\gamma$.
\begin{figure*}[hbtp]
	\centering
	\begin{subfigure}[h]{0.32\textwidth}
		\centering
		\includegraphics[width=\textwidth]{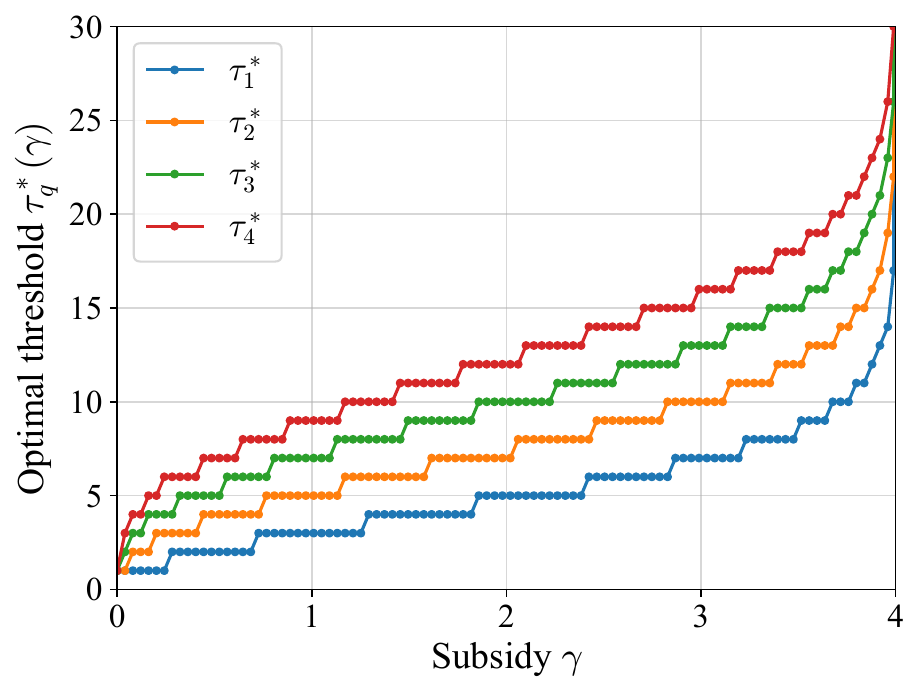}
		\caption{Optimal thresholds $\tau^*_q(\gamma)$ for $K=4$.}\label{fig:model2_indexability_K4}
	\end{subfigure}
	\begin{subfigure}[h]{0.32\textwidth}
		\centering
		\includegraphics[width=\textwidth]{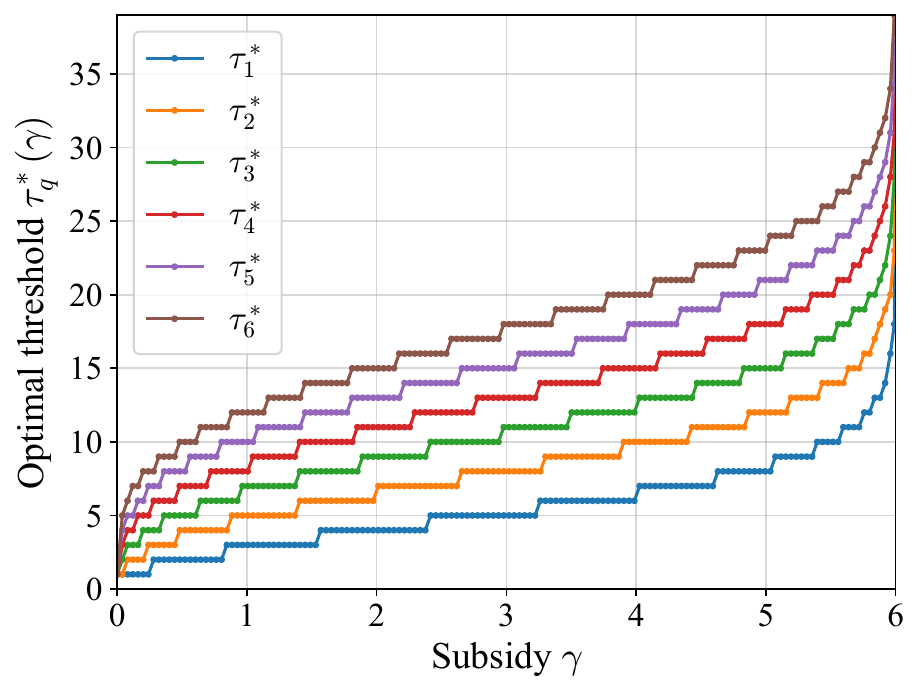}
		\caption{Optimal thresholds $\tau^*_q(\gamma)$ for $K=6$.}\label{fig:model2_indexability_K6}
	\end{subfigure}
	\begin{subfigure}[h]{0.32\textwidth}
		\centering
		\includegraphics[width=\textwidth]{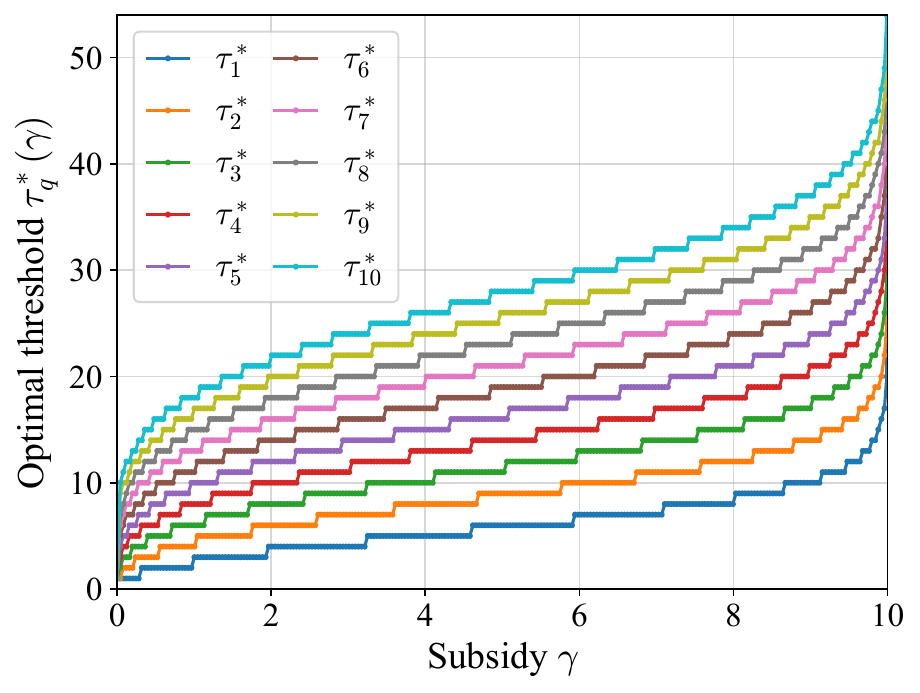}
		\caption{Optimal thresholds $\tau^*_q(\gamma)$ for $K=10$.}\label{fig:model2_indexability_K10}
	\end{subfigure}
	\caption{Optimal thresholds $\tau^*_q(\gamma)$ as a function of $\gamma$ for the blocking queues model's single-arm decoupled problem.}\vspace{-0.35cm}
	\label{fig:model2_indexability}
\end{figure*}
Fig. \ref{fig:model2_indexability} shows the plots of thresholds $\tau_q^*(\gamma)$ against subsidy $\gamma$.
The figure contains the plot for three values of $K=4,6,$ and $10$. We use $\beta=0.99$ and $\mu=0.5$. As can be seen, for all values of $q\in\{1,...,K\}$, the optimal threshold $\tau_q^*(\gamma)$ starts from $1$ and monotonically increases to $\infty$ as subsidy $\gamma$ increases. This shows that the passive set is monotonically non-decreasing with subsidy indicating indexability for general values of $K$ and $\beta$.
Moreover, we can see that the threshold values $\tau_q^*$ are non-decreasing in $q$ as expected since it is optimal to wait longer if the observed backlog is larger and vice-versa.

Next, we simulate the Whittle index, myopic, and the oracle JSQ policies with discount factor $\beta=0.99$ and $N=10$ parallel servers each with $K=6$. We pick variable server rates, $\mu_i=0.05$ for $i=1,...5$, $\mu_i=0.1$ for $i=6,...,9$, and $\mu_i=0.5$ for $i=10$. We compute the approximate Whittle index of each state using the numerical analysis of the single-arm problem described above. However, notice that algorithms like the adaptive greedy algorithm from \cite{whittle_condition_akbarzadeh, xiaojun_age} can be used to compute Whittle index exactly. The myopic policy picks the queue that is most likely to be empty, and the oracle JSQ policy knows the exact backlogs and picks the shortest queue.
\ifisreport
\begin{figure}[hbtp]
	\centering
	\begin{subfigure}[h]{0.8\linewidth}
			\centering
			\includegraphics[width=\linewidth]{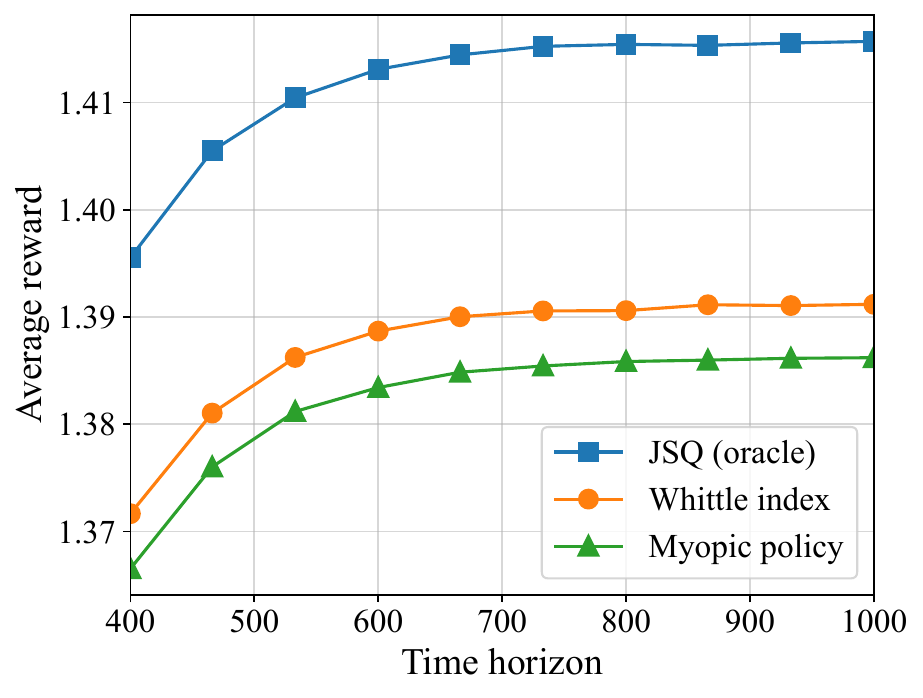}\vspace*{-0.2cm}
			\caption{Average reward.}\label{fig:model2_avg_reward_}\vspace*{0.25cm}
		\end{subfigure}
	\begin{subfigure}[h]{0.8\linewidth}
			\centering
			\includegraphics[width=\linewidth]{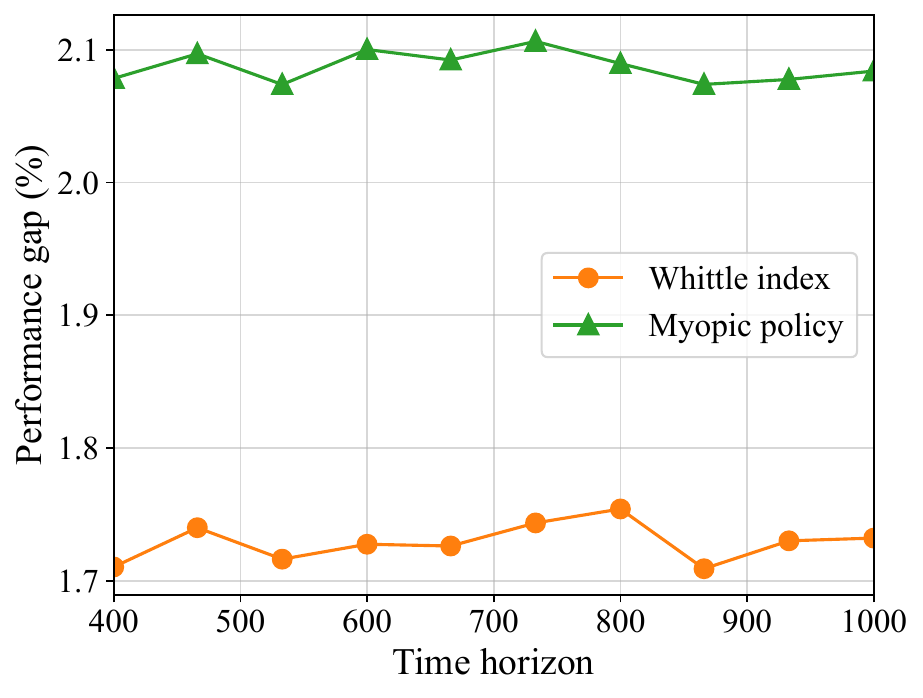}\vspace*{-0.2cm}
			\caption{Performance gap.}\label{fig:model2_gap}
		\end{subfigure}
	\caption{Reward performance for the blocking queues model.}\vspace{-0.1cm}
	\label{fig:model2_reward}
\end{figure}
Fig. \ref{fig:model2_avg_reward_} shows the total discounted reward $\mathcal{R}_\pi(T):=\sum_{t=0}^T \beta^t \Ex [R_\pi(t)]$ as a function of $T$.
Fig. \ref{fig:model2_gap} shows the performance gap relative to oracle $\left({\mathcal{R}_{JSQ}(T) - \mathcal{R}_\pi(T)}\right) / {\mathcal{R}_{JSQ}(T)}$.
\else
Fig. \ref{fig:model2_avg_reward} shows the total discounted reward $\mathcal{R}_\pi(T):=\sum_{t=0}^T \beta^t \Ex [R_\pi(t)]$ as a function of $T$.
\begin{figure}[hbtp]
	\centering
	\includegraphics[width=0.9\linewidth]{figs/model2/reward-plot-mu_mid.pdf}
	\caption{Reward performance for the blocking queues model.}\label{fig:model2_avg_reward}
\end{figure}
Further, we found that the performance gap relative to oracle $\left({\mathcal{R}_{JSQ}(T) - \mathcal{R}_\pi(T)}\right) / {\mathcal{R}_{JSQ}(T)}$ of Whittle index policy was around 1.7\% and the myopic policy was 2.1\%.
\fi
As can be seen, the Whittle index policy beats the myopic policy and is competitive with the oracle JSQ policy.

\section{Conclusion} \label{sec:conc}
We considered the load balancing problem when the server backlogs are partially observed. We formulated this problem as RMAB with arm state defined as the tuple of most recent observation and the time since the recent observation. For the standard finite queues model, we showed threshold optimality and indexability by deriving closed-form expressions. For the blocking queues model, we showed threshold optimality and proved indexability for some special cases. Some potential future directions to make the model more practical is allowing random arrivals to the dispatcher and allowing exogenous arrivals to the servers. We believe these extensions will be more difficult since the transition matrix is not lower triangular and the value function may not be well behaved.

\appendices
\ifisreport
\section{Proof of Lemma \ref{lemma:E_monotone}}\label{proof:monotonicity_of_E}
Recall that $E(\tau)$ is the conditional expected queue backlog given that the time since last filled is $\tau$. Hence, we have $E(\tau) := \Ex[Q(t) \; | \; Q(t-\tau)=K \text{ and zero new arrivals}]$. Denote the offered service at time $t$ by $D(t) \sim \text{i.i.d. Poisson}(\mu)$. Therefore, we have $E(\tau) - E(\tau+1) = \tilde\Ex[\min\{Q(t), D(t)\}]$ where $\tilde\Ex[\cdot] := \Ex[\cdot | Q(t-\tau)=K]$. Since $Q(t) \geq 0$ and $D(t) \geq 0$, we have $E(\tau) - E(\tau+1) \geq 0$. Moreover, since $D(t)$ is i.i.d. across $t$ and since $Q(t-1) \geq Q(t)$, we have  $\Ex[\min\{Q(t-1), D(t-1)\}] \geq \Ex[\min\{Q(t), D(t)\}]$. Therefore, we have $E(\tau-1) - E(\tau) \geq E(\tau) - E(\tau+1)$ for all $\tau$. \hfill \IEEEQEDhere
\section{Proof of Theorem \ref{thm:model1_threshold}}\label{proof:model_1_threshold}
We start by calculating the value function under the theorem's policy $\pi^*(\tau) = \Ind[\tau \geq \tau^*]$. For notational convenience, we define $f(\tau):=\sum_{q'=0}^K P_{Kq'}(\tau) \left(K - q'\right) = K-E(\tau)$. Since $\pi^*$'s action is to probe immediately for $\tau \geq \tau^*$,
\begin{equation*}
	V_\gamma(\tau) = K - E(\tau) + \beta V_\gamma(1) =f(\tau) + \beta V_\gamma(1).
\end{equation*}
Similarly, for $\tau <\tau^*$, since $\pi^*$'s action is to idle, we have
\begin{align*}
	V_\gamma(\tau) &= \gamma + \beta V_\gamma(\tau+1) = \gamma + \beta \gamma +\beta^2 V_\gamma(\tau+2) = ... \\
	& = \gamma \frac{1-\beta^{\tau^*-\tau}}{1-\beta} + \beta^{\tau^*-\tau} V_\gamma(\tau^*) \\
&	= \gamma \frac{1-\beta^{\tau^*-\tau}}{1-\beta} + \beta^{\tau^*-\tau} f(\tau^*) +  \beta^{\tau^*-\tau+1} V_\gamma(1).
\end{align*}
Plugging in $\tau=1$, we can solve for $V_\gamma(1)$ and express the value function under the theorem's policy $\pi^*$ as follows
\begin{equation*}
	V_\gamma(\tau) = \begin{cases}
		\frac{\gamma}{1-\beta} - \frac{\beta^{\tau^*-\tau}}{1-\beta^{\tau^*}}\gamma + \frac{\beta^{\tau^*-\tau}}{1-\beta^{\tau^*}}f(\tau^*) & \text{ for } \tau < \tau^*, \\
		f(\tau) + \frac{\beta (1-\beta^{\tau^*-1})}{(1-\beta)(1-\beta^{\tau^*})}\gamma + \frac{\beta^{\tau^*}}{1-\beta^{\tau^*}}f(\tau^*) & \text{ for } \tau \geq \tau^*.
	\end{cases}
\end{equation*}

Next, we will show that $\pi^*$'s value function calculated above indeed satisfies the Bellman equation \eqref{eq:value_func_model1}. Let $\Delta^{\pi}(\tau) := V_\gamma(\tau; \pi(\tau)=1) - V_\gamma(\tau; \pi(\tau)=0)$ be the difference in values of probing and idling at any $\tau$. To verify that $\pi^*$ satisfies Bellman equations, we must show that $\Delta^{\pi^*}(\tau) \geq 0$ for $\tau\geq\tau^*$ and $\Delta^{\pi^*}(\tau) < 0$ for $\tau<\tau^*$. Indeed, evaluating $\Delta^{\pi^*}(\tau)$ for $\tau<\tau^*$, we have
\begin{equation*}
	\Delta^{\pi^*}(\tau) = f(\tau) - f(\tau^*) + \frac{1-\beta^{\tau^*-\tau}}{1-\beta^{\tau^*}} [f(\tau^*) - \gamma]
\end{equation*}

Now, recall that the policy $\pi^*$'s choice of threshold $\tau^*$ satisfies $W(\tau^*) \leq \gamma \leq W(\tau^* + 1).$ Plugging in the definition of $W(\tau)$ and rearranging, we have $f(\tau^*) - \gamma  \leq \frac{1-\beta^{\tau^*}}{1-\beta} [f(\tau^*-1) - f(\tau^*)]$. Using this inequality in $\Delta^{\pi^*}(\tau)$, we get
\begin{align*}
	\Delta^{\pi^*}(\tau) &\leq f(\tau) - f(\tau^*) + \frac{1-\beta^{\tau^*-\tau}}{1-\beta} [f(\tau^*-1) - f(\tau^*)] \\
	& \hspace*{-0.7cm}= E(\tau^*) - E(\tau) + \frac{1-\beta^{\tau^*-\tau}}{1-\beta} [E(\tau^*) - E(\tau^*-1)] \leq 0
\end{align*}
where, the last inequality is due to Lemma \ref{lemma:E_monotone}.

Similarly, evaluating $\Delta^{\pi^*}(\tau)$ for $\tau\geq\tau^*$, we have
\begin{equation*}
	\Delta^{\pi^*}(\tau) = f(\tau)-\beta f(\tau+1) + \frac{1-\beta}{1-\beta^{\tau^*}}[\beta^{\tau^*}f(\tau^*)-\gamma].
\end{equation*}

Again, using the fact that $\pi^*$'s choice of threshold $\tau^*$ satisfies $W(\tau^*) \leq \gamma \leq W(\tau^* + 1)$, we have $f(\tau^*+1) - \gamma \geq \frac{1-\beta^{\tau^*+1}}{1-\beta}[f(\tau^*+1) - f(\tau^*)]$. Using this in $\Delta^{\pi^*}(\tau)$, we have after some rearrangements
\begin{align*}
\Delta^{\pi^*}(\tau) &\geq f(\tau)- \beta f(\tau+1) - f(\tau^*) + \beta f(\tau^*+1) \\
&=  [E(\tau^*) - E(\tau^*+1)] -[E(\tau) - E(\tau+1)] \\
&\hspace*{1.9cm}+ (1-\beta)[E(\tau^*+1) - E(\tau+1)] \geq 0
\end{align*}
where,  the last inequality is due to Lemma \ref{lemma:E_monotone}. This shows that the value function under $\pi^*$ satisfies the Bellman equation hence we conclude that $\pi^*$ is an optimal policy to model 1's single-arm decoupled problem. \hfill\IEEEQEDhere
\fi
\section{Proof of Proposition \ref{prop:L_structure}}\label{proof:L_structure}
Evaluating the drift $\Delta L_{(q,\tau)}(l)$, we have
\begin{align*}
	\Delta& L_{(q,\tau)}(l) := L_{(q,\tau)}(l) - L_{(q,\tau)}(l+1) \\
	&= \beta^l \left( -\gamma + \sum_{q'=0}^K \left( P_{qq'}(\tau+l) - \beta P_{qq'}(\tau+l+1) \right) \wtilde{V}_\gamma(q') \right) \\
	&= \beta^l \sum_{q'=0}^K P_{qq'}(\tau+l) x_{q'}.
\end{align*}
where, $x_{q'} := -\gamma + \wtilde{V}_\gamma(q') - \beta \sum_{q''=0}^{K} P_{q'q''}(1) \wtilde{V}_\gamma(q'').$

For $q'=0$, we have $x_0 = (1-\beta) (K+\beta V(K,1)) - \gamma.$ Now, since the optimal reward is lower bounded by an always-idle policy's reward, we have $V(K,1)\geq \gamma/(1-\beta)$, therefore $x_0\geq0.$ Further, for $q'>0$, we investigate $x_{q'}$ under two cases. Case 1, the optimal action at $(q',1)$ is idle: Here we have $x_{q'} = -(1-\beta)\gamma + \beta [\beta V_\gamma(q',2) - \sum_{q''=0}^K P_{q'q''}(1)\wtilde{V}_\gamma(K)]$ by plugging in idle action into $\wtilde{V}_\gamma(q')$. Now using Lemma \ref{lemma:levin}, we have $x_{q'} < 0$. Case 2, the optimal action at $(q',1)$ is probe: Here we have $x_{q'} = -\gamma$ by plugging in probe action into $\wtilde{V}_\gamma(q')$.
In summary, we have $x_0>0$ and $x_{q'}<0$ for $q'>0$. Hence, due to Lemma \ref{lemma:rosenfield}, we have $\Delta_{(q,\tau)}(l)$ crosses zero only once in $l$ and $q$ which concludes the proof.
\hfill\IEEEQEDhere
\ifisreport
\section{Proof of Theorem \ref{thm:model2_threshold}}\label{proof:model_2_threshold}
Define $l^*(q,\tau) := \arg \max_{l\geq0} L_{(q,\tau)}(l)$ as the optimal number of time-steps to wait starting from state $(q,\tau)$. Here, ties are broken by selecting the smallest index. Notice that the optimal policy is now given by $\pi^*(q,\tau) = \Ind[l^*(q,\tau) = 0]$ where $\Ind[\cdot]$ is the indicator function. The following lemma shows an important property of $l^*(q,\tau)$.
\begin{lemma}\label{lemma:l_q}
	$l^*(q,\tau+1) = [l^*(q,\tau)-1]^+$ for all $\forall \tau=1,2,...$ where  $[\cdot]^+ := \max\{\cdot, 0 \}$.
\end{lemma}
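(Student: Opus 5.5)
The plan is to relate the sequence $l\mapsto L_{(q,\tau+1)}(l)$ to the sequence $l\mapsto L_{(q,\tau)}(l)$ by a shift of one index, and then to read off the argmax using the unimodality from Proposition \ref{prop:L_structure}.

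\textbf{Step 1: the shift identity.} From the definition of $L_{(q,\tau)}(l)$, for every $l\geq 0$,
\begin{equation*}
L_{(q,\tau)}(l+1) = \gamma + \beta L_{(q,\tau+1)}(l).
\end{equation*}
This holds because $\gamma\frac{1-\beta^{l+1}}{1-\beta} = \gamma + \beta\gamma\frac{1-\beta^l}{1-\beta}$, and because $\beta^{l+1}P_{qq'}(\tau+l+1) = \beta\cdot\beta^l P_{qq'}((\tau+1)+l)$. Since $\beta>0$, the map $x\mapsto \gamma+\beta x$ is strictly increasing. Hence maximizing $L_{(q,\tau+1)}(l)$ over $l\geq 0$ is the same as maximizing $L_{(q,\tau)}(m)$ over $m\geq 1$, with $m=l+1$. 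Ties are also preserved, so the smallest-index tie-breaking rule carries over. This gives
\begin{equation*}
l^*(q,\tau+1) = \Big(\arg\max_{m\geq 1} L_{(q,\tau)}(m)\Big) - 1.
\end{equation*}

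\textbf{Step 2: case $l^*(q,\tau)\geq 1$.} The global argmax over $m\geq 0$ already lies in $\{1,2,\dots\}$. It is therefore also the smallest maximizer over $m\geq 1$, and so $l^*(q,\tau+1) = l^*(q,\tau)-1$.

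\textbf{Step 3: case $l^*(q,\tau)=0$.} We must show the restricted argmax over $m\geq1$ equals $1$. Since $0$ is the smallest maximizer, $L_{(q,\tau)}(0)\geq L_{(q,\tau)}(1)$.
\begin{itemize}
\item If the inequality is strict, the first part of Proposition \ref{prop:L_structure} gives $L_{(q,\tau)}(k)>L_{(q,\tau)}(k+1)$ for all $k\geq 0$. The sequence is then strictly decreasing from $m=1$ on, so the argmax over $m\geq 1$ is $1$.
\item If instead $L_{(q,\tau)}(0)=L_{(q,\tau)}(1)$, then $1$ is also a global maximizer. Hence $L_{(q,\tau)}(1)\geq L_{(q,\tau)}(m)$ for all $m\geq 1$, and $1$ is the smallest maximizer over $m\geq 1$.
\end{itemize}
Either way $l^*(q,\tau+1)=0=[l^*(q,\tau)-1]^+$.

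\textbf{Main obstacle.} The delicate points are existence of the argmax and tie handling. For existence, note that $L_{(q,\tau)}(l)\to\gamma/(1-\beta)$ as $l\to\infty$, so the supremum might not be attained. If $L_{(q,\tau)}$ is nondecreasing forever, then $l^*=\infty$ (always idle, threshold $\infty$). I would adopt the convention $\infty-1=\infty$; the identity then holds trivially.

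Otherwise, some strict decrease occurs, and Proposition \ref{prop:L_structure} makes the sequence strictly decreasing from that point on. The maximum is therefore attained among finitely many indices. For ties, I would rely on the second alternative of Step 3, which avoids needing strictness in Proposition \ref{prop:L_structure}.
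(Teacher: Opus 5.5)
Your proof is correct and follows the paper's argument essentially step for step. Both use the shift identity $L_{(q,\tau)}(l+1)=\gamma+\beta L_{(q,\tau+1)}(l)$, which reduces $l^*(q,\tau+1)$ to one less than the smallest maximizer of $L_{(q,\tau)}$ over $l\geq 1$, and then the same three cases: $l^*(q,\tau)>0$; $l^*(q,\tau)=0$ with strict decrease, handled by the first part of Proposition~\ref{prop:L_structure}; and $l^*(q,\tau)=0$ with a tie. Your added remark on whether the maximum is attained goes beyond the paper, but one detail is off: a nondecreasing sequence that already equals its limit $\gamma/(1-\beta)$ from some finite index on (for example, a constant sequence) has a finite smallest maximizer, so that subcase belongs with Steps 2--3 rather than with $l^*=\infty$.
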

\begin{IEEEproof}
Writing the expression for $L_{(q,\tau+1)}(l)$ and expressing in terms of $L_{(q,\tau)}(l)$, we get $L_{(q,\tau+1)}(l) = [L_{(q,\tau)}(l+1) - \gamma]/\beta$. Therefore, $l^*(q,\tau+1) = \arg \max_{l\geq0} L_{(q,\tau+1)}(l) =  \arg \max_{l\geq0} [L_{(q,\tau)}(l+1) - \gamma]/\beta =  \arg \max_{l\geq0} L_{(q,\tau)}(l+1) =  \arg \max_{l\geq 1} L_{(q,\tau)}(l) - 1$. In summary, we have
\begin{equation} \label{eq:l_base_eq}
	l^*(q,\tau+1) = \arg \max_{l\geq 1} L_{(q,\tau)}(l) - 1
\end{equation}

Case (i) $l^*(q,\tau) > 0$: By definition, $\arg \max_{l\geq0} L_{(q,\tau)}(l) = l^*(q,\tau) > 0$. Therefore, we can write $\arg \max_{l\geq1} L_{(q,\tau)}(l) = \arg \max_{l\geq0} L_{(q,\tau)}(l)$. Plugging this in \eqref{eq:l_base_eq}, we have $l^*(q,\tau+1) = \arg \max_{l\geq 0} L_{(q,\tau)}(l) - 1 = l^*(q,\tau)-1.$

Case (ii-a) $l^*(q,\tau)=0$ and $L_{(q,\tau)}(0) > L_{(q,\tau)}(1)$: Since $L_{(q,\tau)}(0) > L_{(q,\tau)}(1)$, we know from Proposition \ref{prop:L_structure} that $L_{(q,\tau)}(l) > L_{(q,\tau)}(l+1)$ for all $l\geq1$. Therefore, we get $\arg \max_{l\geq1} L_{(q,\tau)}(l) = 1$. Plugging this in \eqref{eq:l_base_eq}, we get $l^*(q,\tau+1) = \arg \max_{l\geq 1} L_{(q,\tau)}(l) - 1 = 0$.

Case (ii-b) $l^*(q,\tau)=0$ and $L_{(q,\tau)}(0) = L_{(q,\tau)}(1)$: Since $\arg \max_{l\geq0} L_{(q,\tau)}(l) = l^*(q,\tau) = 0$ by definition, for all $l=1,2,...$, we have $L_{(q,\tau)}(l) \leq L_{(q,\tau)}(0) = L_{(q,\tau)}(1)$. Therefore, $\arg \max_{l\geq1} L_{(q,\tau)}(l) = 1$. Plugging this in \eqref{eq:l_base_eq}, we get $l^*(q,\tau+1) = \arg \max_{l\geq 1} L_{(q,\tau)}(l) - 1 = 0$.

Combining these cases, we have $l^*(q,\tau+1) = [l^*(q,\tau)-1]^+$. This concludes proof of Lemma \ref{lemma:l_q}.
\end{IEEEproof}

Using Lemma \ref{lemma:l_q}, we can now show that the optimal policy is a threshold policy with thresholds $\tau^*_q := l^*(q,1) + 1$. Indeed, using $l^*(q,\tau+1) = [l^*(q,\tau)-1]^+$ recursively, we have $\pi^*(q,\tau) = \Ind[l^*(q,\tau) = 0] = \Ind[(l^*(q,1)- \tau +1)^+ = 0] = \Ind[(\tau^*_q-\tau)^+=0] = \Ind[\tau \geq \tau^*_q].$ This shows that a threshold policy with thresholds $\tau^*_q := l^*(q,1) + 1$ is optimal.

We now prove the theorem's second part i.e., $\tau_{q}^* \leq \tau_{q'}^*$ for $q \leq q'$. From above, we know $\tau^*_q = l^*(q,1) + 1$. If $l^*(q,1) = 0$, then $\tau^*_q=1$ and $\tau^*_{q'} = l^*(q',1) + 1 \geq 1 = \tau^*_q$ is trivially true. So, assume $l^*(q,1) > 0$. Since $l^*(q,1) = \arg \max_{l\geq0} L_{(q,1)}(l)$ with ties broken by selecting the smallest index, we have $L_{(q,1)}(l^*(q,1) - 1) < L_{(q,1)}(l^*(q,1))$. Hence, due to Proposition \ref{prop:L_structure}'s second part, $\forall q'\geq q$ we have
\begin{equation}\label{eq:threshold_monotonicity_contradiction}
	L_{(q',1)}(l^*(q,1) - 1) < L_{(q',1)}(l^*(q,1)).
\end{equation}

Now assume that $\tau_{q}^*>\tau_{q'}^*$ for some $q'\geq q$. In  other words, $\tau_{q}^*-1 \geq \tau_{q'}^*$. Since $\tau^*_q = l^*(q,1) + 1$, we can now write $l^*(q,1)-1 \geq l^*(q',1) = \arg \max_{l\geq0} L_{(q',1)}(l)$. Hence, from Proposition \ref{prop:L_structure}'s first part, we have  $L_{(q',1)}(l) \geq L_{(q',1)}(l +1)$ for all $l\geq l^*(q,1)-1$. However this contradicts \eqref{eq:threshold_monotonicity_contradiction} hence proving that $\tau_{q}^* \leq \tau_{q'}^*$ for all $q \leq q'$.
 \hfill\IEEEQEDhere
\fi
\section{Proof of Proposition \ref{prop:active_condition}}\label{proof:active_condition}
We prove this by contradiction similar to the proof of the total discounted passive time condition in \cite{spectrum_sharing_main}. Assume contradiction that $\exists \; \gamma_0,q_0$ such that optimal threshold $\tau^*_{q_0}(\gamma)$ is decreasing at $\gamma=\gamma_0$. Since $\tau^*_{q_0}(\gamma)$ is the threshold, we have $V_{\gamma_0}^{\pi_1}(q=q_0, \tau=\tau^*_{q_0}(\gamma_0)) \geq V_{\gamma_0}^{\pi_0}(q=q_0, \tau=\tau^*_{q_0}(\gamma_0))$ where, the superscripts $\pi_1$ and $\pi_0$ indicate the policies used. Since threshold $\tau^*_{q_0}(\gamma)$ is decreasing at $\gamma=\gamma_0$, there exists some $\epsilon>0$ such that $\tau_{q_0}^*(\gamma_0) < \tau_{q_0}^*(\gamma_0-\epsilon)$. Therefore, for a problem with subsidy $\gamma=\gamma_0-\epsilon$, it is optimal to idle when $(q,\tau) = (q_0, \tau_{q_0}(\gamma_0))$. In other words, we have $V_{\gamma_0-\epsilon}^{\pi_1}(q=q_0, \tau=\tau^*_{q_0}(\gamma_0)) < V_{\gamma_0-\epsilon}^{\pi_0}(q=q_0, \tau=\tau^*_{q_0}(\gamma_0))$. Combining the two value function inequalities obtained above, we have $\partial V_\gamma^{\pi_1}(q_0,\tau_{q_0}^*)/\partial\gamma > \partial V_\gamma^{\pi_0}(q_0,\tau_{q_0}^*)/\partial\gamma$ at $\gamma=\gamma_0$. Using the fact that the partial derivative of optimal value w.r.t. subsidy is equal to the total discounted passive time, this inequality can be rewritten as $A_{\pi_1}(q_0,\tau_{q_0}^*) < A_{\pi_0}(q_0,\tau_{q_0}^*)$.
However, this contradicts the Proposition's statement.
\hfill\IEEEQEDhere
\section{Proof of Theorem \ref{thm:indexability_smallK}}\label{proof:indexability_smallK}

The blocking queues model and the standard queues models are equivalent when $K=1$. Since we have already shown indexability for the standard queue model, we focus on $K=2$. Let $\wtilde{A}_q$ be the active time under the optimal policy given that an observation of $q$ was just made. After observing a backlog $q$, the optimal policy waits $\tau_q^*$ time-steps and probes again, hence we have $\forall q\in\{1...,K\}$,
$
	\wtilde{A}_q = 1 + \beta^{\tau^*_q} \sum_{q'=0}^K P_{qq'}(\tau_q^*)\wtilde{A}_{q'} +  \beta^{\tau^*_q} P_{q0}(\tau_q^*)\wtilde{A}_{K}
$
where, we used $\wtilde{A}_0 = \wtilde{A}_K$ since after seeing zero backlog, the dispatcher fills the queue to $K$ and from then the system behaves exactly the same as observing $K$. Setting $K=2$ in the above equations and rearranging, we have
\begin{equation}\label{eq:active_time_basis_1}
	\wtilde{A}_1-\wtilde{A}_2 = \frac{(1-(1-\beta^{\tau^*_1})) \wtilde{A}_2}{1-\beta^{\tau_1^*} P_{11}(\tau_1^*)} = \frac{-(1-(1-\beta^{\tau^*_2})\wtilde{A}_2)}{\beta^{\tau_2^*}P_{21}(\tau^*_2)}.
\end{equation}
Further, under the optimal policy, the waiting time between any two probes is one of $\{\tau^*_1,..., \tau^*_K\}$. Further, since $\tau^*_1\leq \tau^*_2 ...\leq \tau^*_K$, the sparsest possible activation pattern is all probes being separated by $\tau^*_K$ time-steps. Similarly, the densest possible activation pattern is all probes being separated by $\tau^*_1$ time-steps. Therefore, $\forall q$ we have
\begin{equation}\label{eq:active_time_basis_2}
	\frac{1}{1-\beta^{\tau_K^*}} \leq \wtilde{A}_q \leq \frac{1}{1-\beta^{\tau_1^*}}.
\end{equation}

Now, since $\pi_1$ probes immediately, we have $A_{\pi_1}(q,\tau_q^*) = \sum_{q'=0}^K P_{qq'}(\tau^*_{q})A_{q'}$. And since $\pi_0$ idles initially, in the next time-step its state is $(q,\tau^*_q+1)$ and since it acts optimally at this time-step, it must probe and hence $A_{\pi_0}(q,\tau_q^*) = \beta\sum_{q'=0}^K P_{qq'}(\tau^*_{q}+1)A_{q'}$. Plugging in $K=2$ and simplifying,
\begin{multline*}
	\Delta A(q,\tau^*_q) := A_{\pi_1}(q,\tau_q^*) - A_{\pi_0}(q,\tau_q^*) \\
	= (1-\beta)\wtilde{A}_2 + (P_{q1}(\tau_q^*) - \beta P_{q1}(\tau_q^*+1)) (\wtilde{A}_1-\wtilde{A}_2).
\end{multline*}

For $q=1$, since departures are Poisson, $P_{11}(\tau)=e^{-\mu\tau}$. Therefore, $\Delta A(1,\tau^*_1) = (1-\beta)\wtilde{A}_2 + e^{-\mu\tau^*_1}(1-\beta e^{-\mu})(\wtilde{A}_1-\wtilde{A}_2).$ Using \eqref{eq:active_time_basis_1} and \eqref{eq:active_time_basis_2}, it can be seen that  $\Delta A(1,\tau^*_1)\geq 0$. For $q=2$, using Poisson departures, $P_{21}(\tau) = \mu\tau e^{-\mu\tau}$. Plugging this in and using \eqref{eq:active_time_basis_1}, $\Delta A(2,\tau^*_2) = (1-\beta)\wtilde{A}_2 - \beta^{-\tau_2^*}(1-(1+1/\tau_2^*)\beta e^{-\mu})(1-(1-\beta^{\tau^*_2})\wtilde{A}_2).$ Now, using \eqref{eq:active_time_basis_2}, $e^{-\mu}\leq1$, and $\beta^{\tau_2^*}\geq 1-\tau_2^*(\beta^{-1}-1)$, we can show that $\Delta A(2,\tau^*_2) \geq (1-\beta)/(1-\beta^{\tau_2^*}) \geq 0.$ Combining these, we have $A_{\pi_1}(q,\tau_q^*) \geq A_{\pi_0}(q,\tau_q^*)$ for both $q=1,2$. Hence, we have indexability due to Proposition \ref{prop:active_condition}.
\hfill\IEEEQEDhere

\bibliographystyle{IEEEtran}
\bibliography{content/refs}

@misc{weber_jsq,
	note = {R. R. Weber, ``On the optimal assignment of customers to parallel servers,'' \emph{Journal of Applied Probability}, vol. 15, no. 2, pp. 406--413, 1978.}
}

@misc{mitz_jsq_d,
	note = {M. Mitzenmacher, ``The power of two choices in randomized load balancing,'' \emph{IEEE Transactions on Parallel and Distributed Systems}, vol. 12, no. 10, pp. 1094--1104, Oct. 2001, doi: 10.1109/71.963420.}
}

@misc{jiq_1,
	note = {R. Badonnel and M. Burgess, ``Dynamic pull-based load balancing for autonomic servers,'' in \emph{NOMS 2008 IEEE Network Operations and Management Symposium}, 2008, pp. 751--754.}
}

@article{belief_sufficient_stat,
	title={The optimal control of partially observable Markov processes over a finite horizon},
	author={Smallwood, Richard D and Sondik, Edward J},
	journal={Operations research},
	volume={21},
	number={5},
	pages={1071--1088},
	year={1973},
	publisher={INFORMS}
}

@article{whittle,
	title={Restless bandits: Activity allocation in a changing world},
	author={Whittle, Peter},
	journal={Journal of applied probability},
	volume={25},
	number={A},
	pages={287--298},
	year={1988},
	publisher={Cambridge University Press}
}

@article{whittle_optimality,
	title={On an index policy for restless bandits},
	author={Weber, Richard R and Weiss, Gideon},
	journal={Journal of applied probability},
	volume={27},
	number={3},
	pages={637--648},
	year={1990},
	publisher={Cambridge University Press}
}

@article{whittle_condition_nino,
	title={Restless bandits, partial conservation laws and indexability},
	author={Nino-Mora, Jose},
	journal={Advances in Applied Probability},
	volume={33},
	number={1},
	pages={76--98},
	year={2001},
	publisher={Cambridge University Press}
}

@article{whittle_condition_akbarzadeh,
	title={Conditions for indexability of restless bandits and an algorithm to compute Whittle index},
	author={Akbarzadeh, Nima and Mahajan, Aditya},
	journal={Advances in Applied Probability},
	volume={54},
	number={4},
	pages={1164--1192},
	year={2022},
	publisher={Cambridge University Press}
}

@misc{known_queues_glazebrook,
	note = {K. D. Glazebrook, C. Kirkbride, and J. Ouenniche, ``Index policies for the admission control and routing of impatient customers to heterogeneous service stations,'' \emph{Operations Research}, vol. 57, no. 4, pp. 975--989, 2009.}
}

@article{akbarzadeh_machine,
	title={Maintenance of a collection of machines under partial observability: Indexability and computation of whittle index},
	author={Akbarzadeh, Nima and Mahajan, Aditya},
	journal={Les Cahiers du GERAD ISSN},
	volume={711},
	pages={2440},
	year={2021}
}

@article{vishrant_age,
	title={A whittle index approach to minimizing functions of age of information},
	author={Tripathi, Vishrant and Modiano, Eytan},
	journal={IEEE/ACM Transactions on Networking},
	volume={32},
	number={6},
	pages={5144--5158},
	year={2024},
	publisher={IEEE}
}

@inproceedings{xiaojun_age,
	title={An easier-to-verify sufficient condition for whittle indexability and application to AoI minimization},
	author={Zhou, Sixiang and Lin, Xiaojun},
	booktitle={IEEE INFOCOM 2024-IEEE Conference on Computer Communications},
	pages={1741--1750},
	year={2024},
	organization={IEEE}
}

@article{spectrum_sharing_main,
	title={Indexability of restless bandit problems and optimality of whittle index for dynamic multichannel access},
	author={Liu, Keqin and Zhao, Qing},
	journal={IEEE Transactions on Information Theory},
	volume={56},
	number={11},
	pages={5547--5567},
	year={2010},
	publisher={IEEE}
}

@inproceedings{spectrum_sharing_2,
	title={Indexability and whittle index for restless bandit problems involving reset processes},
	author={Liu, Keqin and Weber, Richard and Zhao, Qing},
	booktitle={2011 50th IEEE Conference on Decision and Control and European Control Conference},
	pages={7690--7696},
	year={2011},
	organization={IEEE}
}

@article{rosenfield_machine,
	title={Markovian deterioration with uncertain information},
	author={Rosenfield, Donald},
	journal={Operations Research},
	volume={24},
	number={1},
	pages={141--155},
	year={1976},
	publisher={INFORMS}
}

@article{rosenfield_machine_2,
	title={Markovian deterioration with uncertain information—a more general model},
	author={Rosenfield, Donald},
	journal={Naval Research Logistics Quarterly},
	volume={23},
	number={3},
	pages={389--405},
	year={1976},
	publisher={Wiley Online Library}
}

@techreport{levin_machine,
	title={Optimal Inspection Policies for Deteriorating Markov Processes.},
	author={Levin, Robert D},
	year={1977}
}

@techreport{tech_report,
	title={Load Balancing with Partial Queue Information: Threshold Optimality and Indexability.},
	author={Chadaga, Sathwik and Modiano, Eytan},
	year={2026}
}

\end{document}